\theoremstyle{definition}
\newtheorem{dfn}{Definition}
\newtheorem{lem}[dfn]{Lemma}
\newtheorem{thm}[dfn]{Theorem}
\def\BibTeX{{\rm B\kern-.05em{\sc i\kern-.025em b}\kern-.08em
    T\kern-.1667em\lower.7ex\hbox{E}\kern-.125emX}}
\begin{document}
%
\title{Tie-Breaking Rule Based on\\ Partial Proof of Work in a Blockchain}

\author{\IEEEauthorblockN{Akira Sakurai}
\IEEEauthorblockA{\textit{Kyoto University} \\
Tokyo, Japan }
\and
\IEEEauthorblockN{Kazuyuki Shudo}
\IEEEauthorblockA{\textit{Kyoto University} \\
Kyoto, Japan }
}



\maketitle

\begin{abstract}
In the area of blockchain, numerous methods have been proposed for suppressing intentional forks by attackers more effectively than the random rule. However, all of them, except for the random rule, require major updates, rely on a trusted third party, or assume strong synchrony. Hence, it is challenging to apply these methods to existing systems such as Bitcoin.

To address these issues, we propose another countermeasure that can be easily applied to existing proof of work blockchain systems. Our method is a tie-breaking rule that uses partial proof of work, which does not function as a block, as a time standard with finer granularity. By using the characteristic of partial proof of work, the proposed method enables miners to choose the last-generated block in a chain tie, which suppresses intentional forks by attackers.
Only weak synchrony, which is already met by existing systems such as Bitcoin, is required for effective functioning.

We evaluated the proposed method through a detailed analysis that is lacking in existing works. In networks that adopt our method, the proportion of the attacker hashrate necessary for selfish mining was approximately 0.31479 or higher, regardless of the block propagation capability of the attacker. Furthermore, we demonstrated through extended selfish mining that the impact of Match against pre-generated block, which is a concern in all last-generated rules, can be mitigated with appropriate parameter settings.
\end{abstract}

\section{Introduction}
Blockchain is the underlying technology in numerous decentralized currency systems, including Bitcoin\cite{bitcoin}. Unlike Byzantine fault-tolerant distributed consensus algorithms such as PBFT\cite{PBFT}, blockchain allows any node to participate in the system. This enables the elimination of trusted third parties during transaction processing. In this study, we focus specifically on proof of work (PoW) blockchains.

In blockchain systems, transactions are processed in units known as blocks. Each block consists of a header and a list of transactions that are processed in that block. A block is valid when its header's hash value is less than a certain target value. The process of identifying such a header is known as mining. The nodes that participate in the system and generate blocks are referred to as miners. When miners successfully generate a block, they publish it and the published block is shared across the network. Miners can receive rewards for the blocks that they successfully generate.

Each block contains information that references the previous block. The Genesis Block, as defined by the protocol, is eventually reached by recursively following these references. The sequence of blocks that is obtained by following these references is uniquely determined for each block, and this sequence is referred to as a chain. Miners use a fork choice rule based on the blocks that they possess to determine the main chain. For example, in Bitcoin, the longest chain is selected (which is known as the longest-chain rule). Each miner engages in mining on the main chain.

Situations exist in which the fork choice rule alone cannot uniquely determine the main chain, which is known as a chain tie. For example, this occurs when multiple longest chains exist in Bitcoin. The rules for uniquely determining a chain during a chain tie are known as tie-breaking rules. In Bitcoin, the tie-breaking rule is to choose the chain that is first seen (hereafter referred to as the first-seen rule).

The security of a blockchain system depends on each miner adhering to the aforementioned protocol. Therefore, it is desirable that following the protocol is the most profitable action for each miner. However, Eyal et al. introduced a new mining strategy known as selfish mining (SM) and demonstrated that blockchain does not always satisfy the necessary incentive compatibility\cite{majorityisnotenough}. Specifically, they demonstrated that miners can gain more rewards by withholding a successfully generated block instead of publishing it immediately, and then releasing it when another node publishes a block.

Eyal et al. proposed a tie-breaking rule that randomly selects a chain during chain ties (hereafter referred to as the random rule), thereby introducing the approach of modifying tie-breaking rules as a countermeasure to SM \cite{majorityisnotenough}.

To suppress intentional forks more effectively than the random rule, several studies have been conducted\cite{majorityisnotenough, oneweirdtricktostopselfishminers, PreventingBitcoinSelfishMiningUsingTransactionCreationTime, PublishorPerish, ZeroBlock, CounteringSelfishMininginBlockchains, PreventingSelfishMininginPublicBlockchainUsing, TFTstrategy}. However, their solutions face the following three significant challenges: they necessitate major updates, require a trusted third party, or assume strong synchrony among miners.

To the best of our knowledge, Heilman was the first to propose a tie-breaking rule that selects the latest chain (hereafter referred to as last-generated rules)\cite{oneweirdtricktostopselfishminers}. Last-generated rules enable each miner to distinguish between chains that are withheld by attackers and others. Consequently, last-generated rules can suppress intentional forks by attackers more effectively than the random rule.

Heilman introduced a separate time standard from the blockchain to construct the last-generated rule. However, it requires a trusted third party, which makes it difficult to implement in trustless systems such as Bitcoin. Besides this method, several other approaches have also been proposed. However, apart from the random rule, they require major system updates that could undermine the system, necessitate a trusted third party, or demand strong synchrony, which is not feasible in actual systems.

We propose a last-generated rule to address these challenges. We use partial PoW as a new time standard for constructing our method. A partial PoW is data that contain a weaker PoW compared to the original block. In the case of Bitcoin, a partial PoW is a block header that meets a certain level of PoW. Blockchain represents the flow of time through the accumulation of blocks. As partial PoWs can be generated more easily and in greater numbers than blocks, they can be used to represent a more granular flow of time than the blockchain.

The proposed method is practical and applicable to existing systems that employ PoW blockchains, including Bitcoin.
Our proposed method requires only that there exists a widely known upper bound on the message propagation times, and there exists a widely known upper bound on the absolute values of clock drifts. This synchrony is the minimum requirement for last-generated rules.
Our approach does not require a specific trusted third party. Furthermore, it does not require updates that can destabilize the system, such as soft forks or hard forks.

We conduct a detailed analysis of the proposed method. We further refine the classification of intentional chain ties that are initiated by attackers into two categories: Match against post-generated block and Match against pre-generated block. We thoroughly evaluate our method for each category. Additionally, we introduce extended selfish mining (ESM), which considers Match against pre-generated block, specifically to analyze concerns in the last-generated rule. This mining strategy allows for a more detailed evaluation of last-generated rules, including the proposed method\cite{oneweirdtricktostopselfishminers, PreventingBitcoinSelfishMiningUsingTransactionCreationTime, PreventingSelfishMininginPublicBlockchainUsing}. This demonstrates the significance of parameter settings in the last-generated rule.

\textbf{Organizations.} The remainder of this paper is organized as follows. Section \ref{intentionalfork} describes intentional forks by attackers. Section \ref{relatedworks} discusses related works. Section \ref{proposedmethod} describes the proposed method. Section \ref{evaluation} demonstrates the effectiveness of the proposed method. Section \ref{detail} presents a detailed discussion, primarily focusing on the application of the proposed method and consideration of clock drift. Section \ref{conclusion} concludes the paper.

\section{Intentional Forks by Attackers}
\label{intentionalfork}
In a blockchain, a fork refers to a situation in which a chain splits into multiple branches. There are two types of forks: those caused intentionally by miners not following the protocol and those that occur naturally. Attacks such as SM\cite{majorityisnotenough, CounteringSelfishMininginBlockchains, optimalSelfishMining, stubbornMining, Forkafterwithholding} are typical examples of intentional fork exploitation. Among intentional forks, chain ties that are intentionally created by attackers are known as intentional chain ties. SM and similar attacks exploit these intentional chain ties. The proposed method is a tie-breaking rule that aims to suppress intentional chain ties. Suppressing intentional chain ties leads to suppressing intentional forks, which in turn helps to mitigate attacks such as SM.

We consider SM as an example to demonstrate the exploitation of intentional forks. First, the proportion of a miner's own blocks in the main chain is defined as the relative revenue. The goal of SM is to achieve a higher relative revenue than the proportion of the attacker's hashrate in the total network. If this is accomplished, the attack is considered successful. The attacker withholds the block that they have successfully mined instead of publishing it immediately. When an honest miner generates a block, the attacker publishes their block, creating a chain tie. If a block is generated that extends the attacker's chain after the chain tie, the honest miner’s block is invalidated, and vice versa. If the attacker mines blocks consecutively, they publish their block when the length difference between their chain and the honest miner’s chain is one. In this scenario, the honest miner's chain is invalidated because the attacker's chain is longer. Thus, the attacker can invalidate the blocks that are created by honest miners and increase their relative revenue by manipulating the timing of publishing their mined blocks in a manner that does not follow the protocol.

Intentional chain ties by attackers can be further divided into two types: Match against post-generated block and Match against pre-generated block. The former occurs when an attacker waits for an honest miner to generate and publish their block after successfully generating a block, and then simultaneously publishes their own pre-generated block, causing a chain tie. The latter occurs when the attacker continues to mine on a chain without updating it after an honest miner generates a block. Subsequently, the attacker publishes its own newly generated block, which leads to a chain tie.

Match against pre-generated block is difficult under the first-seen rule, however, it becomes feasible under last-generated rules. Last-generated rules impose a time limit on accepting chains in a tie to address this issue. This time limit is referred to as the acceptance window $w$. Without such a limit, attackers can always initiate chain ties against honest miners and invalidate their blocks. Conversely, Match against pre-generated block can be mitigated to an extent by setting this limit. We detail the extent to which setting an acceptance window quantitatively suppresses Match against pre-generated block in Section \ref{matchagainstpre-generatedblock}.

\section{Related Work} \label{relatedworks}
Eyal et al. introduced SM\cite{majorityisnotenough}. They also demonstrated that when SM attackers initiate chain ties under the first-seen rule, the proportion of honest miners who mine to extend the attacker's chain is strongly dependent on the attacker’s block propagation ability. Under the first-seen rule, if the attacker can deliver their block to all honest miners earlier than the blocks that honest miners generate, the honest miners' blocks can be reliably invalidated. Therefore, Eyal et al. proposed a tie-breaking rule for randomly selecting the main chain from the chains that are involved in the chain tie. This approach enables the proportion of honest miners mining to extend the attacker's chain during a chain tie to be reduced to 0.5, irrespective of the attacker's block-propagation ability. This tie-breaking rule had been adopted by Ethereum~\cite{ethereum}.

Heilman was the first to propose a last-generated rule as a countermeasure against SM\cite{oneweirdtricktostopselfishminers}. This method involves embedding the timestamps that are provided by a trusted third party outside the system into blocks. These timestamps are compared during a chain tie to determine the block that was generated more recently. Blocks that are generated by attackers and by honest miners can be distinguished using this method. However, the provision of timestamps relies on a trusted third party, which makes it difficult to implement in real blockchain systems. In addition, potential drawbacks exist, such as the risk of the trusted third party supplying fraudulent timestamps to attackers.

Lee et al. proposed a last-generated rule that is distinct from the approach of Heilman\cite{PreventingBitcoinSelfishMiningUsingTransactionCreationTime}. In this method, timestamps are first embedded into transactions at the time of creation. During a chain tie, the average timestamps of the transactions that are contained in each block are compared to differentiate between blocks generated by attackers and honest miners. One vulnerability of this method is that it overlooks the possibility of attackers creating transactions with more recent timestamps and including them in their blocks, thereby freely manipulating the average timestamps of transactions in the block. Furthermore, each transaction being required to embed its creation timestamp can negatively affect the transaction-processing capacity of the system.

All of the aforementioned methods are tie-breaking rules. Zhang et al. proposed a fork choice rule known as weighted FRP to suppress SM\cite{PublishorPerish}. This method involves including uncle blocks in a block to consider the number of published blocks in the main chain selection. However, this approach requires changes to the fork choice rule, which may be expensive to implement and can result in multiple forks. In addition, this method may lead to prolonged forks in situations in which the network is fragmented. 

Zhang et al. also proposed another fork choice rule against SM\cite{in-block}. The method uses in-blocks similar to partial PoWs.
However, the method needs a soft fork and suppresses only a specific behavior of an attacker. That limits the effectiveness of the method.

Solat et al. proposed a method named ZeroBlock in which empty blocks are generated when no block is produced for a certain period, thereby invalidating withheld blocks\cite{ZeroBlock}. However, this method requires precise clock synchronization across the entire system, which is a stringent requirement, especially for global-scale decentralized systems such as Bitcoin. Moreover, this approach assumes that blocks in PoW blockchains are generated at regular intervals, but the interval between block generations follows an exponential distribution in actual systems, which would limit the effectiveness of their proposed method. Additionally, implementing this method requires significant updates, such as hard forks, which makes it challenging to apply to systems that are already operational (e.g., Bitcoin).

Saad et al. proposed a fork choice rule with a similar concept to that of Lee et al.\cite{CounteringSelfishMininginBlockchains}. Their method uses high accuracy in predicting the block height, which is the time at which transactions are likely to be included in a block. Similar to the method of Lee et al., the average predicted processing times of transactions that are included in a block are used to distinguish between blocks generated by attackers and honest miners. However, their method also does not address the possibility of attackers including arbitrary transactions in their blocks. Furthermore, it applies only to specific patterns of attacker behavior, which limits its effectiveness.

Reno et al. proposed a method that combines the ZeroBlock approach suggested by Solat et al. with a tie-breaking rule that compares timestamps in blocks\cite{PreventingSelfishMininginPublicBlockchainUsing}. However, attackers can also arbitrarily determine the timestamps that are included in blocks in this method. Furthermore, the problems associated with ZeroBlock remain, such as the assumption of clock synchronization and the need for incompatible updates during implementation.

Sun et al. proposed a method that allows honest miners to delay block propagation to a selfish miner\cite{TFTstrategy}. The method assumes that the network adopts the first-seen rule. The method reduces the winning rate of the attacker by delaying notification of the new block generated by honest miners. However, the method overlooks the following two critical issues. Firstly, they assume that all honest miners adopt the method. Even if one of the honest miners does not adopt the method, the effectiveness of the method is reduced. This is because honest miners who do not adopt the method may send blocks to the attacker as usual. Secondly, they assume that the blockchain network consists solely of mining pools. In reality, numerous solo miners are also participating in the network. The attacker can exploit this by setting up nodes to facilitate block propagation to itself.

Although the first-seen rule is primarily used in the current Bitcoin, discussions on alternative tie-breaking rules are ongoing \cite{cunicula2013why}. Several tie-breaking rules have been proposed that allow all miners to decide unanimously on a main chain in the case of a chain tie. One proposal is a method that selects the chain to which the most computing resources are devoted. Initially, each miner publishes a block header that achieves a certain difficulty level. These headers do not need to function as valid headers for actual blocks. In the event of a chain tie, a chain whose head is referenced by numerous block headers is selected. Our method also involves the sharing of block headers that have achieved a certain difficulty level. A major difference between our method and the aforementioned approach is that our method is characterized by its ability to select the latest chain. In addition, our method considers the number of block headers that are referenced by blocks belonging to a chain, rather than the number of headers that reference blocks belonging to a chain.

\begin{figure}[tb]
\begin{center}
\includegraphics[width=0.5\textwidth]{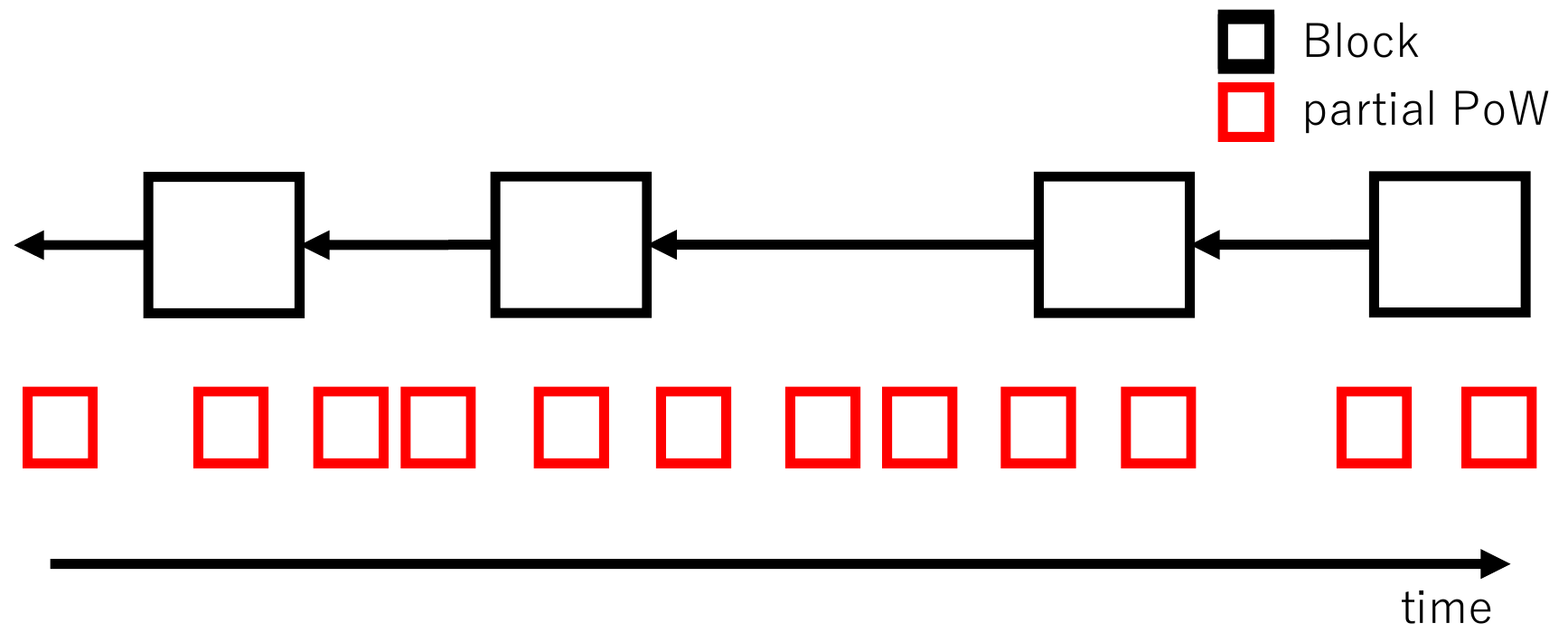}
\end{center}
\caption{Blockchain functions as a clock that progresses block by block. Partial PoWs are easier to generate than blocks. Hence, the number of partial PoWs is larger than that of blocks. Consequently, the blockchain functions as a more precise clock by using partial PoWs.}
\label{timeflow}
\end{figure}

\section{Proposed Method} \label{proposedmethod}
Our proposed method is a type of last-generated rule. We use partial PoW, which originally had no functional use as a block, as a new time standard. Partial PoWs are easier to generate, can be produced in larger quantities than in blocks, and thus can function as a more granular clock (Figure \ref{timeflow}).

The proposed method is practical. Widespread sharing of partial PoWs across the system is required for our method to be effective. However, a partial PoW typically has a smaller size than blocks and does not consume significant network resources from miners. In addition, the implementation of our method has a negligible impact on the transaction-processing capacity of the system. Our method does not require updates that could divide the hashrate of the system, such as hard forks or soft forks, nor does it require a trusted third party. Although a certain level of synchrony in the message propagation times and the progression of miners' clocks (clock drift) are necessary, these requirements are achievable in most blockchain systems and do not compromise the high level of asynchrony that is inherent in PoW blockchains. These factors make our method more practical than previous methods.

Due to its practicality, the proposed method can easily be applied to existing systems such as Bitcoin. In addition, the proposed method is a tie-breaking rule, which provides the following two advantages. Firstly, the proposed method does not generate stale blocks resulting from its partial application. Secondly, the proposed method can be applied to existing fork choice rules. For example, it can be applied to rules such as GHOST\cite{GHOST}, which had been partially adopted in Ethereum\cite{ethereum}, and weighted FRP\cite{PublishorPerish}.

\subsection{Our Model}
Three assumptions regarding synchrony are required for the effectiveness of the proposed method. The first assumption is the existence of a widely known value $\Delta_B$ such that the 100th percentile propagation time of any block is less than $\Delta_B$. Similarly, a widely known value $\Delta_P$ should exist that ensures that the 100th percentile propagation time of any partial PoW is less than $\Delta_P$. The first assumption seems viable based on measurements of Bitcoin~\cite{statsofbitcoin}. Considering that a partial PoW is significantly smaller than a block, which affects the message propagation time, the second assumption also appears to be reasonable.

The final assumption is that the impact of clock drift among miners is sufficiently small to be negligible. For simplicity, We adopt this assumption in our explanation. However, this assumption can easily be replaced with a weaker one: there exists a widely known value $D$ such that the absolute value of the clock drift of any miner is less than $D$. We elaborate on how clock drift affects the proposed method in Section \ref{drift}. In Bitcoin, the PoW blockchain ensures functionality provided that the clock discrepancies of miners do not exceed one hour. Our method can be integrated into the system without compromising the high level of asynchrony that is inherent in PoW blockchains. This is because we make assumptions only about clock drift, not about clock offset.

The above assumptions that there exists a widely known upper bound on the message propagation times, and there exists a widely known upper bound on the absolute values of clock drifts are the minimum requirement for any last-generated rule. This is because last-generated rules have an acceptance window and need to measure how much time has passed since the blocks arrived. 

We make several assumptions regarding attackers in our model. First, attackers are assumed to be capable of instantly becoming aware of any message that is publicly released on the network. In addition, they are assumed to have the ability to deliver any message to any miner instantaneously. Our assumption allows attackers to detect and propagate blocks in any way.

\begin{figure}[tb]
\begin{center}
\includegraphics[width=0.5\textwidth]{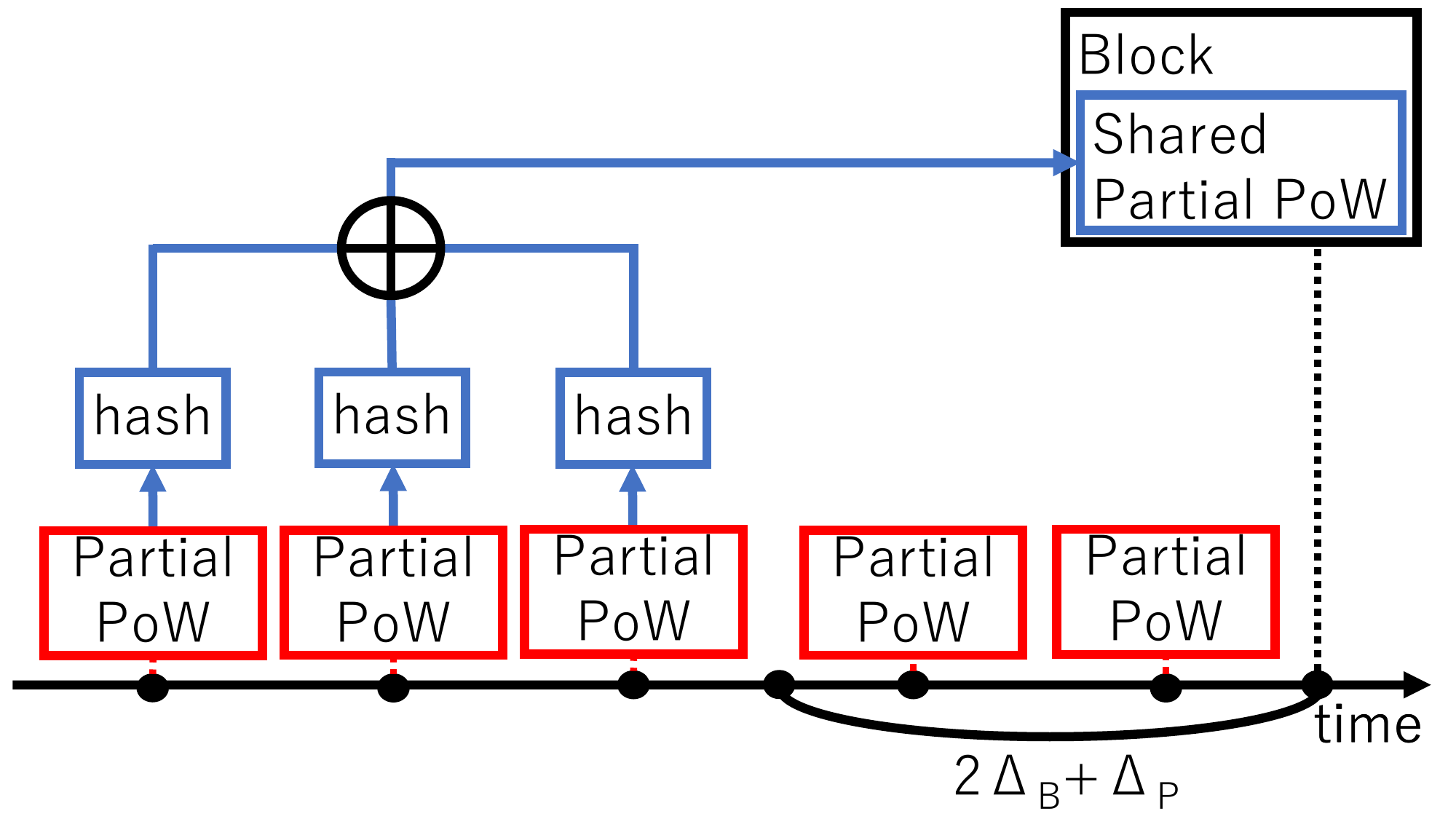}
\end{center}
\caption{How to include shared partial PoW in a block. Each miner considers a partial PoW ``sufficiently shared" if $2 \Delta_B + \Delta_P$ has elapsed since its arrival. Each miner includes sufficiently shared partial PoWs into a block and mines.}
\label{shared}
\end{figure}

\begin{table}[tb]
\centering
\caption{List of variables.}
\begin{tabular}{ll}
\hline
Variable &  \begin{tabular}{l} Definition\end{tabular} \\
\hline
$h$ & \begin{tabular}{l} Block header\end{tabular}\\
\textit{Target} &\begin{tabular}{l}Target value in mining \end{tabular} \\
$H()$ & \begin{tabular}{l} Cryptographic hash function  \end{tabular} \\
$n$ & \begin{tabular}{l}Difficulty adjuster\end{tabular}  \\
$T$ & \begin{tabular}{l} Average block generation interval \end{tabular}  \\
$V$ & \begin{tabular}{l} Set of miners\end{tabular} \\
$N_i$ & \begin{tabular}{l} Total hash computations by miners who do not include a\\ partial PoW generated by miner $i$ into shared partial PoWs\end{tabular} \\
$p$ & \begin{tabular}{l} Probability of 
successfully generating a block \\in one hash computation\end{tabular} \\
$M_i$ & \begin{tabular}{l} Hashrate of miner $i$ \end{tabular}\\
$M_{all}$ & \begin{tabular}{l}Total hashrate of the system \end{tabular} \\
$\alpha$ & \begin{tabular}{l}Proportion of hash rate occupied by the attacker\end{tabular}  \\
$\Delta_B$ & \begin{tabular}{l}Upper bound of block propagation time \end{tabular}  \\
$\Delta_P$ &\begin{tabular}{l} Upper bound of partial PoW propagation time\end{tabular}  \\
$T_W$ & \begin{tabular}{l} Hashrate-weighted average block propagation time\end{tabular} \\
$\gamma$ & \begin{tabular}{l} Proportion of honest miners mining on the attacker's chain\\during match against post-generated block \end{tabular} \\
$\gamma'$ & \begin{tabular}{l} Proportion of honest miners mining on the attacker's chain\\during match against pre-generated block \end{tabular} \\
$s$ & \begin{tabular}{l} Unresponsive time  \end{tabular}\\
$o$ & \begin{tabular}{l} Probability of block generation from the network \\during unresponsive time \end{tabular} \\
$d_i$ & \begin{tabular}{l} Clock drift of miner $i$  \end{tabular}\\
$D$ & \begin{tabular}{l} Upper bound of absolute value of clock drift  \end{tabular}\\
$x_m$ & \begin{tabular}{l}Measured value of time progression \end{tabular}  \\
$x_r$ & \begin{tabular}{l}Actual value of time progression \end{tabular}  \\

\hline
\end{tabular}
\end{table}

\subsection{Details}
Here is the method with the fixed difficulty adjuster (hereafter referred to as the n-fixed method). We elaborate on the method with the variable difficulty adjuster (hereafter referred to as the n-variable method) in Section \ref{applicationToSystems}.

Partial PoW is a block header with a generation difficulty that is $1/n$ times lower than that of a valid block, where $n$ is the difficulty adjuster. That is, a block header $h$ functions as a valid block header when it satisfies $H(h) < \textit{Target}$, where $H$ is a cryptographic hash function and \textit{Target} is a value that is determined by the difficulty. However, it also functions as a partial PoW if it satisfies $H(h) < n\textit{Target}$. Upon successfully generating partial PoW, miners publish and share it on the network, regardless of its validity as a block. As opposed to the propagation of transactions, the risk of DoS attacks owing to the propagation of partial PoWs is not a concern. This is because the validity of a partial PoW is verified, thereby providing proof of work\cite{PricingviaProcessingorCombattingJunkMail}.

The proposed method requires each miner to manage partial PoWs independent of the blockchain. It necessitates embedding new information, termed shared partial PoW, into blocks. We elaborate on how to embed shared partial PoW into a block in the case of Bitcoin in Section \ref{applicationToSystems}. Shared partial PoW is defined as the exclusive logical XOR of the hash values of the partial PoWs that belong to the set of valid partial PoWs for which $2 \Delta_B + \Delta_P$ has elapsed since their arrival (see Figure \ref{shared}). In addition, when propagating a block, miners must send the block as well as the set corresponding to the shared partial PoW. Hereafter, we refer to the exclusive logical sum of partial PoWs in the set, the set corresponding to the shared partial PoW, and partial PoWs that are valid and for which $2\Delta_B + \Delta_P$ has elapsed since their arrival as shared partial PoW.

The fork choice rule that is applied in the proposed method is described as follows: Initially, each miner manages the arrival time of the blocks. The arrival time of a chain is determined by the arrival time of the most recently generated block in the chain; that is, the head of the chain. If no chain tie exists, the main chain is uniquely determined by a predefined fork choice rule. 
In the event of a chain tie, the chains in the tie that have arrived within the acceptance window of $\Delta_B$ from the arrival time of the earliest chain in the tie are selected as the main chain candidates. Next, the blocks constituting each chain candidate are extracted. Then, the shared partial PoW of each block is retrieved, and it is verified whether all partial PoWs belonging to the shared partial PoW are sufficiently shared. Here, each miner considers partial PoWs as ``sufficiently shared" if they are valid and $2\Delta_B$ has elapsed since their arrival. If even one partial PoW in a chain is not sufficiently shared, that chain is assigned a value of $-1$. If all partial PoWs are sufficiently shared, the size of the set corresponding to the shared partial PoW of the chain is assigned. Finally, the main chain with the largest assigned value among the candidates is selected. In the event of a tie, a chain is randomly selected from among the tied chains.


Algorithm \ref{algotiebreakingrulebasedonpartial PoW} is an algorithm that applies our proposed method to the fork choice rule. The getMainChain function (line 1) uses multiple chains as arguments and determines one main chain. In line 2, the fork choice rule (forkChoiceRule function) is used to select the main chain candidates. Subsequently, the earliest arrival time ($earliestTime$) among the main chain candidates is determined up to line 6. Next, from the main chain candidates, those whose arrival time is within $w$ from $earliestTime$ are selected (line 9), and among these, a main chain is uniquely determined using the proposed method (line 10). The chainMax function (line 16) uses two chains as arguments and returns the one that contains more shared partial PoWs. Line 17 determines whether $\textit{C}_\textit{1}$ is an empty chain. Thereafter, the chainWeight function (line 31) determines which chain contains more shared partial PoWs and returns the chain with more shared partial PoWs. If the number is equal, one chain is randomly selected. The chainWeight function (line 31) uses a single chain as an argument and returns the number of shared partial PoWs in that chain. First, each block belonging to the chain is selected to obtain the total shared partial PoW of the entire chain (lines 32-35). Subsequently, each shared partial PoW belonging to the chain is selected. If it is valid and sufficiently shared (line 38), 1 is added to \textit{res} (line 39); otherwise, $-1$ is returned (line 41). The validity of the shared partial PoW was verified using the valid function. Finally, \textit{res} is returned (line 44).

\begin{algorithm}
\caption{fork choice rule using proposed method}
\label{algotiebreakingrulebasedonpartial PoW}
\begin{algorithmic}[1]
\Procedure{getMainChain}{$\textit{C}_\textit{1}, \ldots, \textit{C}_\textit{k}$}
    \State $\textit{candidates} \gets \text{forkChoiceRule}(\textit{C}_\textit{1}, \ldots, \textit{C}_\textit{k})$
    
    \State $\textit{earliestTime} \gets 0$
    \For{$\textit{C} \in \textit{candidates}$}
        \State $\textit{earliestTime} \gets \text{min}(\textit{earliestTime}, \textit{C}.\textit{arrivalTime})$
    \EndFor

    \State $\textit{res} \gets \epsilon$
    \For{$\textit{C} \in \textit{candidates}$}
        \If{$\textit{C}.\textit{arrivalTime} - \textit{earliestTime} \leq \textit{w}$}
            \State $\textit{res} \gets \text{chainMax}(\textit{res}, \textit{C})$
        \EndIf
    \EndFor
    \State \Return $\textit{res}$
\EndProcedure \\

\Procedure{chainMax}{$\textit{C}_\textit{1}, \textit{C}_\textit{2}$}
    \If{$\textit{C}_\textit{2} = \epsilon$}
        \State \Return $\textit{C}_\textit{1}$
    \EndIf
    \If{$\text{chainWeight}(\textit{C}_\textit{1}) > \text{chainWeight}(\textit{C}_\textit{2})$}
        \State \Return $\textit{C}_\textit{1}$
    \Else
        \If{$\text{chainWeight}(\textit{C}_\textit{1}) < \text{chainWeight}(\textit{C}_\textit{2})$}
            \State \Return $\textit{C}_\textit{2}$
        \Else
            \State \Return $\textit{C}_\textit{1}$ or $\textit{C}_\textit{2}$ randomly
        \EndIf
    \EndIf
\EndProcedure \\

\Procedure{chainWeight}{$\textit{C}$}
    \State $\textit{SharedPartialPoW} \gets \emptyset$
    \For{$\textit{block} \in \textit{C}$}
        \State $\textit{SharedPartialPoW} = \textit{SharedPartialPoW} \cup \textit{block}.\textit{SharedPartialPoW}$
    \EndFor
    \State $\textit{res} \gets 0$
    \For{$\textit{PartialPoW} \in \textit{SharedPartialPoW}$}
        \If{$\text{valid}(\textit{PartialPoW})$ \textbf{and} $\textit{currentTime} - \textit{PartialPoW}.\textit{arrivalTime} > 2\Delta_B$}
            \State $\textit{res} = \textit{res} + 1$
        \Else
            \State \Return $-1$
        \EndIf
    \EndFor
    \State \Return $\textit{res}$
\EndProcedure 
\end{algorithmic}
\end{algorithm}

\subsection{Parameter Settings}\label{parameter}
We examine the parameter settings and several associated properties. 

First, we discuss the parameter settings for the acceptance window of the chains in a tie.
\begin{thm}
    When an attacker performs a Match against post-generated block, any honest miner can receive the block that is generated by another honest miner before their own acceptance window ends if the acceptance window $w$ is at least $\Delta_B$.
\end{thm}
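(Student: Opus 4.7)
The plan is to fix an arbitrary honest miner $H_2$ and show that both chains in the tie arrive at $H_2$ within a time span of $\Delta_B$, so a window of length $w\ge\Delta_B$ suffices to contain the later of the two arrivals.

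First I would set up notation. Let honest miner $H_1$ publish their block at time $t_0$. By the definition of Match against post-generated block (Section \ref{intentionalfork}), the attacker reacts by publishing their pre-generated block only after $H_1$'s block has been released. Invoking the attacker model from Section \ref{proposedmethod}, the attacker becomes aware of $H_1$'s block instantaneously, so the attacker's publication time $t_A^{\mathrm{pub}}$ satisfies $t_A^{\mathrm{pub}}\ge t_0$. Let $t_H$ be the time at which $H_1$'s block reaches $H_2$ and $t_A$ the time at which the attacker's block reaches $H_2$. By the first synchrony assumption (upper bound $\Delta_B$ on block propagation time), $t_H\le t_0+\Delta_B$ and $t_A\le t_A^{\mathrm{pub}}+\Delta_B$; crucially, because $t_A^{\mathrm{pub}}\ge t_0$, we also have $t_A\ge t_0$, whereas $t_H\ge t_0$ trivially.

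Next I would do a two-line case analysis on which block arrives first at $H_2$. If the attacker's block arrives first, then $H_2$'s acceptance window starts at $t_A\ge t_0$ and ends at $t_A+w$; the honest block arrives at $t_H\le t_0+\Delta_B\le t_A+\Delta_B$, which lies inside the window whenever $w\ge\Delta_B$. If the honest block arrives first, the statement is immediate because the honest block is itself the chain whose arrival defines the start of the window. Combining the two cases proves the claim.

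There is no real obstacle here beyond bookkeeping; the only subtlety is recognizing that the worst case is not when the attacker publishes immediately but when the attacker delays as much as possible while still qualifying as ``Match against post-generated block,'' and that even in that worst case the two arrival times at $H_2$ differ by at most $\Delta_B$ because both blocks begin propagating no earlier than $t_0$ and are delivered no later than $t_0+\Delta_B$. Hence setting $w=\Delta_B$ is both sufficient and, intuitively, tight.
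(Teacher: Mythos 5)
Your proof is correct and takes essentially the same route as the paper: the key observation in both is that the attacker's block cannot arrive anywhere before the honest block's generation time $t_0$, so every honest miner's window starts no earlier than $t_0$, while the honest block arrives by $t_0+\Delta_B$, and your two-case analysis merely makes this explicit. One caveat on your closing remark only: the worst case is in fact the attacker publishing \emph{immediately} (earliest possible window start), not delaying as long as possible, and a delaying attacker's block need not arrive by $t_0+\Delta_B$ --- but this does not damage the proof, since your first case uses only the one-sided bound $t_H - t_A \leq \Delta_B$, which holds regardless of the attacker's delay.
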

\begin{proof}
    Recall the assumption that an attacker can instantly detect a block that is generated by an honest miner and immediately deliver the block that they have generated to all miners without delay. This implies that the earliest start time for the acceptance window of any honest miner is when an honest miner generates a block. If the acceptance window $w$ is at least $\Delta_B$, any honest miner will receive a block that is generated by another honest miner before their acceptance window ends. This is because $\Delta_B$ is greater than the 100th percentile propagation time of the block.
\end{proof}
Considering the aforementioned theorem, the acceptance window $w$ is set to $\Delta_B$ in the proposed method. 

Next, we explain the parameter settings for a sufficiently shared partial PoW.
\begin{thm}
     If the time considered for a valid partial PoW to be sufficiently shared is at least $w + \Delta_B$, a partial PoW that is published by an attacker after an honest miner generates a block will not be considered as sufficiently shared by any honest miner before their acceptance window ends.
\end{thm}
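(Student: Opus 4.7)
The plan is to do a direct timeline comparison. I would fix notation: let $t_h$ denote the time at which the honest miner generates and publishes the block that initiates the tie, let $t_p > t_h$ denote the time at which the attacker publishes the partial PoW in question, and fix an arbitrary honest miner $m$. Let $t_a^m \ge t_p$ be the time at which the partial PoW arrives at $m$, and let $t_s^m$ be the time at which $m$'s acceptance window opens, i.e., the arrival time at $m$ of the earliest chain in the tie.

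First I would upper-bound $t_s^m$. Because the honestly generated block is published at $t_h$, the synchrony assumption on $\Delta_B$ guarantees that this block arrives at $m$ by time $t_h + \Delta_B$, so at least one chain in the tie is visible to $m$ by that instant. Hence $t_s^m \le t_h + \Delta_B$, and $m$'s acceptance window closes no later than $t_s^m + w \le t_h + w + \Delta_B$.

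Next I would lower-bound the earliest moment at which $m$ regards the attacker's partial PoW as sufficiently shared. By hypothesis the sufficient-sharing threshold is at least $w + \Delta_B$, and since $t_a^m \ge t_p > t_h$, the partial PoW cannot be deemed sufficiently shared at $m$ before time $t_a^m + (w + \Delta_B) > t_h + w + \Delta_B$. Comparing this with the window-closure bound yields $t_a^m + (w + \Delta_B) > t_s^m + w$, which is exactly the claimed strict separation.

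I do not expect a significant obstacle here; the proof is essentially an inequality chase built on the attacker model from Section~\ref{proposedmethod} and the bound from the preceding theorem. The only subtlety is to be careful about the worst cases allowed by the attacker's instantaneous-delivery capability: both $t_s^m$ can be pushed as low as $t_h$ (if the attacker delivers the pre-generated block with no delay) and $t_a^m$ can be pushed arbitrarily close to $t_h$ from above (if the attacker releases and delivers the partial PoW just after $t_h$). The strict inequality $t_p > t_h$ is preserved all the way through, and the threshold $w + \Delta_B$ is exactly the slack needed to overcome the at-most-$\Delta_B$ head start that the acceptance window can have over the arrival of the attacker's partial PoW.
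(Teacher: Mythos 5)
Your proof is correct and takes essentially the same route as the paper's: you bound every honest miner's window closure by $t_h + w + \Delta_B$, using the fact that the acceptance window opens within $\Delta_B$ of the honest block's generation regardless of how the attacker propagates its own block, and then observe that a partial PoW published strictly after $t_h$ cannot meet a sufficient-sharing threshold of at least $w + \Delta_B$ before that deadline. Your explicit timeline notation ($t_h$, $t_p$, $t_a^m$, $t_s^m$) just formalizes the inequality chase the paper states in prose; there is no gap.
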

\begin{proof}
    Regardless of how the block that is generated by an attacker is propagated, the acceptance window for any honest miner starts within $\Delta_B$ after an honest miner generates a block. Therefore, the acceptance windows of all miners will end within $w + \Delta_B$ of an honest miner generating a block. Consequently, if the time that is considered for a valid partial PoW to be sufficiently shared is at least $w + \Delta_B$, no honest miner will regard a partial PoW that is published after an honest miner generates a block as sufficiently shared within their acceptance window.
\end{proof}
Considering the aforementioned theorem and the acceptance window $w = \Delta_B$, a valid partial PoW is considered as sufficiently shared $2\Delta_B$ after it arrives in the proposed method. 

Next, we examine the parameter settings for shared partial PoW.
\begin{thm}
     A valid partial PoW for which $w + \Delta_B + \Delta_P$ has elapsed since its arrival will be sufficiently shared for all honest miners.
\end{thm}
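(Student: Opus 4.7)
The plan is to chain together the partial PoW propagation bound with the earlier choice of parameters. Recall that the acceptance window is set to $w=\Delta_B$ and that a valid partial PoW is deemed sufficiently shared once $2\Delta_B$ has elapsed since its arrival, so the target $w+\Delta_B+\Delta_P$ simplifies to $2\Delta_B+\Delta_P$. I would fix an arbitrary honest miner $i$ and denote by $t$ the arrival time of the partial PoW at $i$; the goal is to show that by time $t+w+\Delta_B+\Delta_P$, every honest miner $j$ considers the partial PoW sufficiently shared.

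The first step uses the propagation assumption for partial PoWs. Because $i$ possesses the partial PoW at time $t$ and relays it, and because the $100$th-percentile propagation time of any partial PoW is bounded by $\Delta_P$, every honest miner $j$ has received the partial PoW by time $t+\Delta_P$ at the latest. At this point $j$ also knows the partial PoW is valid, since validity is a syntactic property of the header that any recipient can check locally.

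The second step converts the arrival bound at $j$ into the sufficiently-shared condition at $j$. By definition, $j$ regards a valid partial PoW as sufficiently shared once $2\Delta_B$ has elapsed since its arrival at $j$. Since $j$'s arrival time is at most $t+\Delta_P$, this condition is satisfied by time $t+\Delta_P+2\Delta_B=t+w+\Delta_B+\Delta_P$. As $j$ was arbitrary, the conclusion holds for every honest miner.

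I do not expect a real obstacle here, since the statement is essentially a bookkeeping consequence of the propagation bound $\Delta_P$, the equality $w=\Delta_B$, and the $2\Delta_B$ threshold fixed in the previous theorem. The only subtle point to articulate carefully is that the $\Delta_P$ bound applies from the moment any honest miner (here $i$) first holds the partial PoW, rather than from some unknown broadcast instant controlled by the attacker; this justifies using $t$, not an earlier time, as the reference point for propagation to the remaining honest miners.
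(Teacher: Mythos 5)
Your proof is correct and follows essentially the same route as the paper's: bound the arrival at every other honest miner by $\Delta_P$ via the propagation assumption, then add the sufficiently-shared threshold ($w+\Delta_B = 2\Delta_B$) to reach $w+\Delta_B+\Delta_P$. Your version merely makes explicit the parameter substitution $w=\Delta_B$ and the choice of the reference arrival time, both of which the paper leaves implicit.
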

\begin{proof}
    When an honest miner receives a partial PoW, all honest miners receive that partial PoW within $\Delta_P$. Therefore, after more than $w + \Delta_B + \Delta_P$ has elapsed since the honest miner received the partial PoW, the partial PoW is considered to be sufficiently shared by all honest miners.
\end{proof}
Considering the aforementioned theorems and the acceptance window $w = \Delta_P$, our proposed method includes a valid partial PoW in the shared partial PoW $2\Delta_B + \Delta_P$ after its arrival.

Note that if even one of the shared partial PoWs in the blocks that constitute a chain is not sufficiently shared, a value of $-1$ is assigned to that chain. This ensures that honest miners can always distinguish between the attacker's block and the block generated by an honest miner when an attacker who has generated a block but has not published its shared partial PoW triggers Match against post-generated block.

\section{Evaluation} \label{evaluation}
In this section, we evaluate the effectiveness of our proposed method. In Section \ref{matchagainstpost-generatedblock}, we demonstrate the effectiveness of our method in suppressing Match against post-generated block. This section includes both theoretical and simulation analyses.

In Section \ref{matchagainstpre-generatedblock}, we evaluate the proposed method by using ESM, which is a straightforward extension of SM that also considers Match against pre-generated block. We investigated the difference in the relative revenue for attackers between ESM and SM by varying the acceptance window. The analysis results demonstrate the importance of setting the acceptance window for a last-generated rule.

\begin{figure*}[tb]
    \centering
    \begin{subfigure}[b]{0.3\textwidth}
        \centering
        \includegraphics[width=\textwidth]{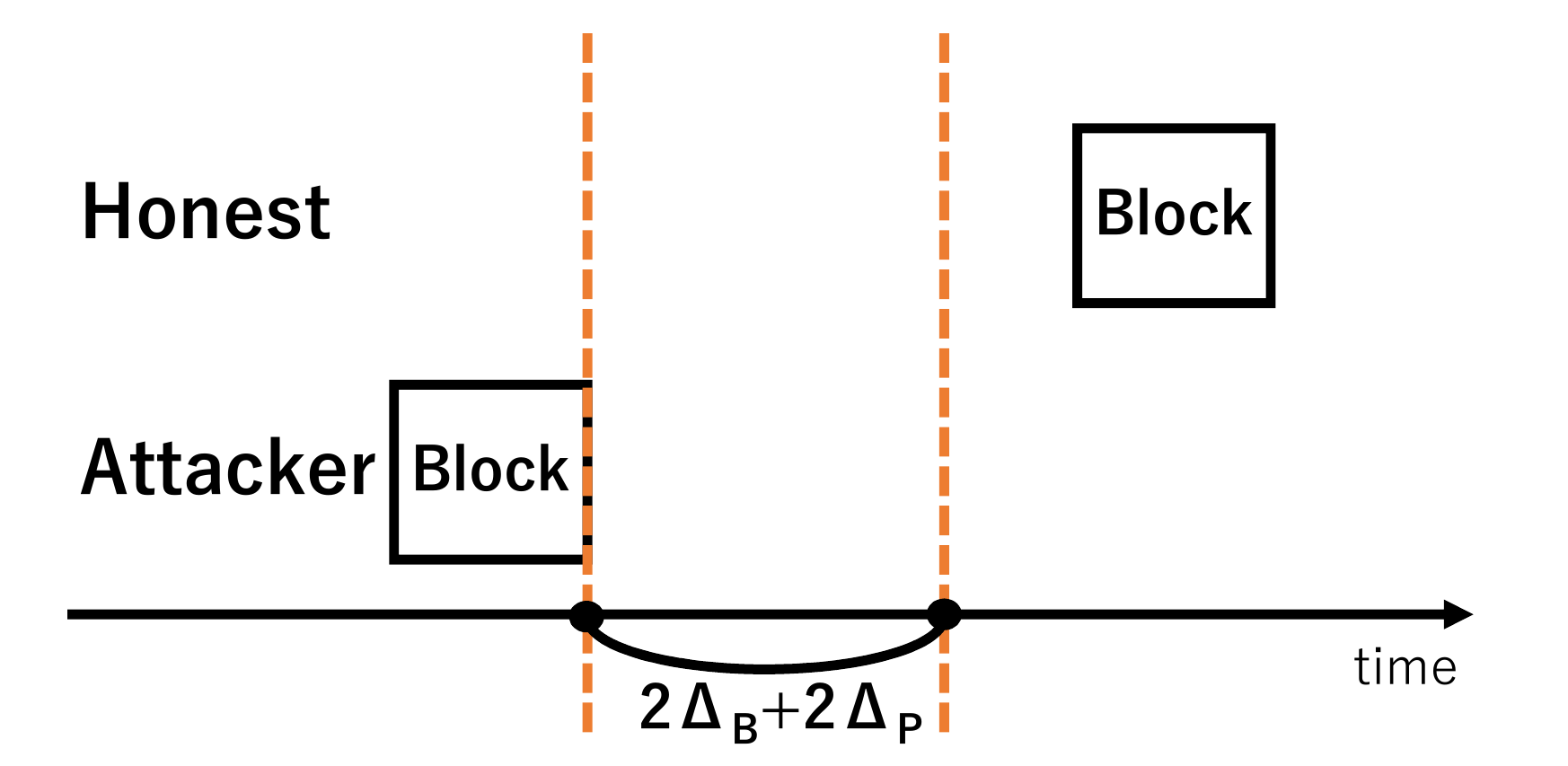}
        \caption{Situation that honest miners even.}
        \label{PPoWEVEN}
    \end{subfigure}
    \hfill 
    \begin{subfigure}[b]{0.3\textwidth}
        \centering
        \includegraphics[width=\textwidth]{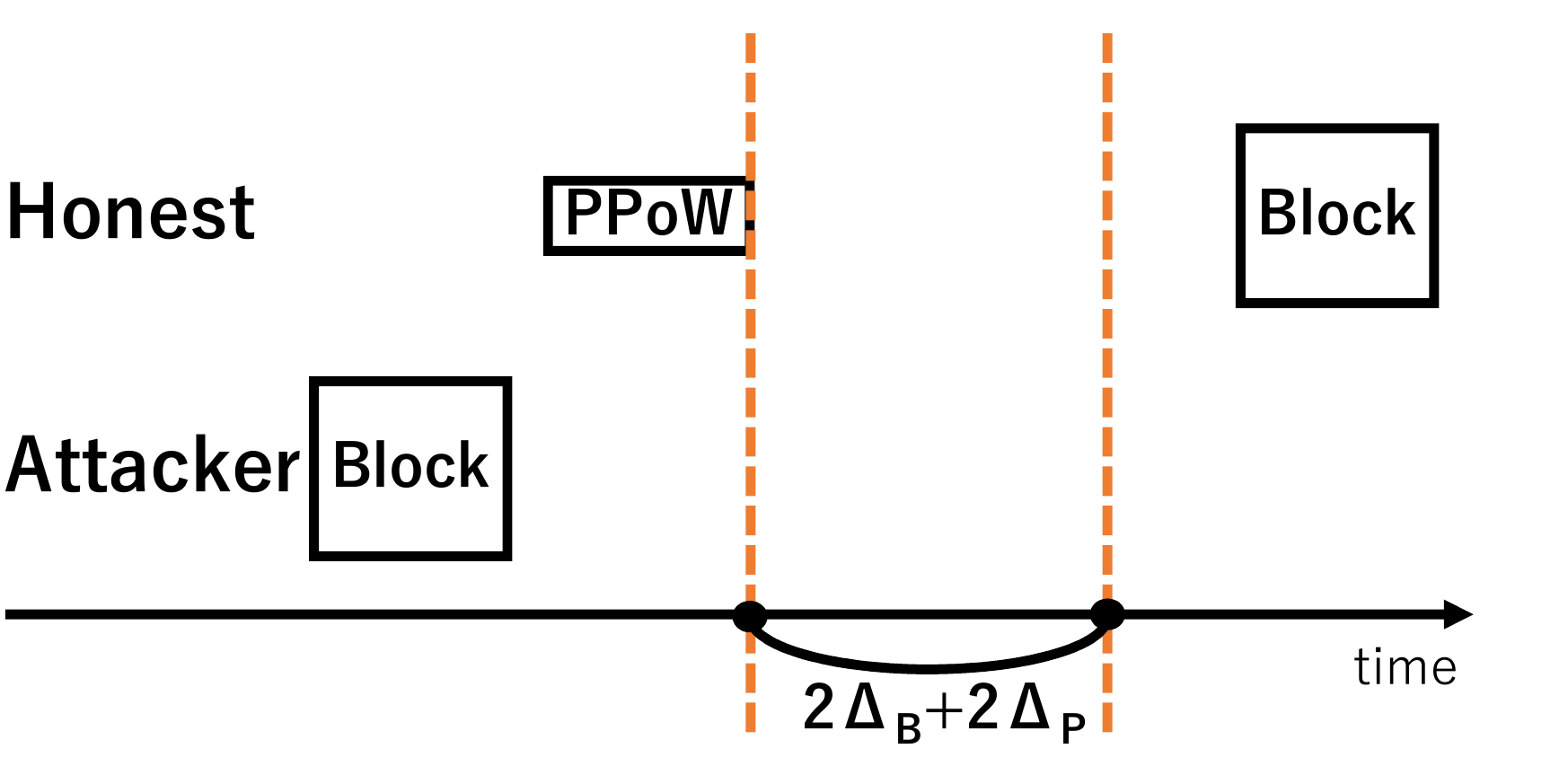}
        \caption{Situation that honest miners win.}
        \label{PPoWWIN}
    \end{subfigure}
    \hfill
    \begin{subfigure}[b]{0.3\textwidth}
        \centering
        \includegraphics[width=\textwidth]{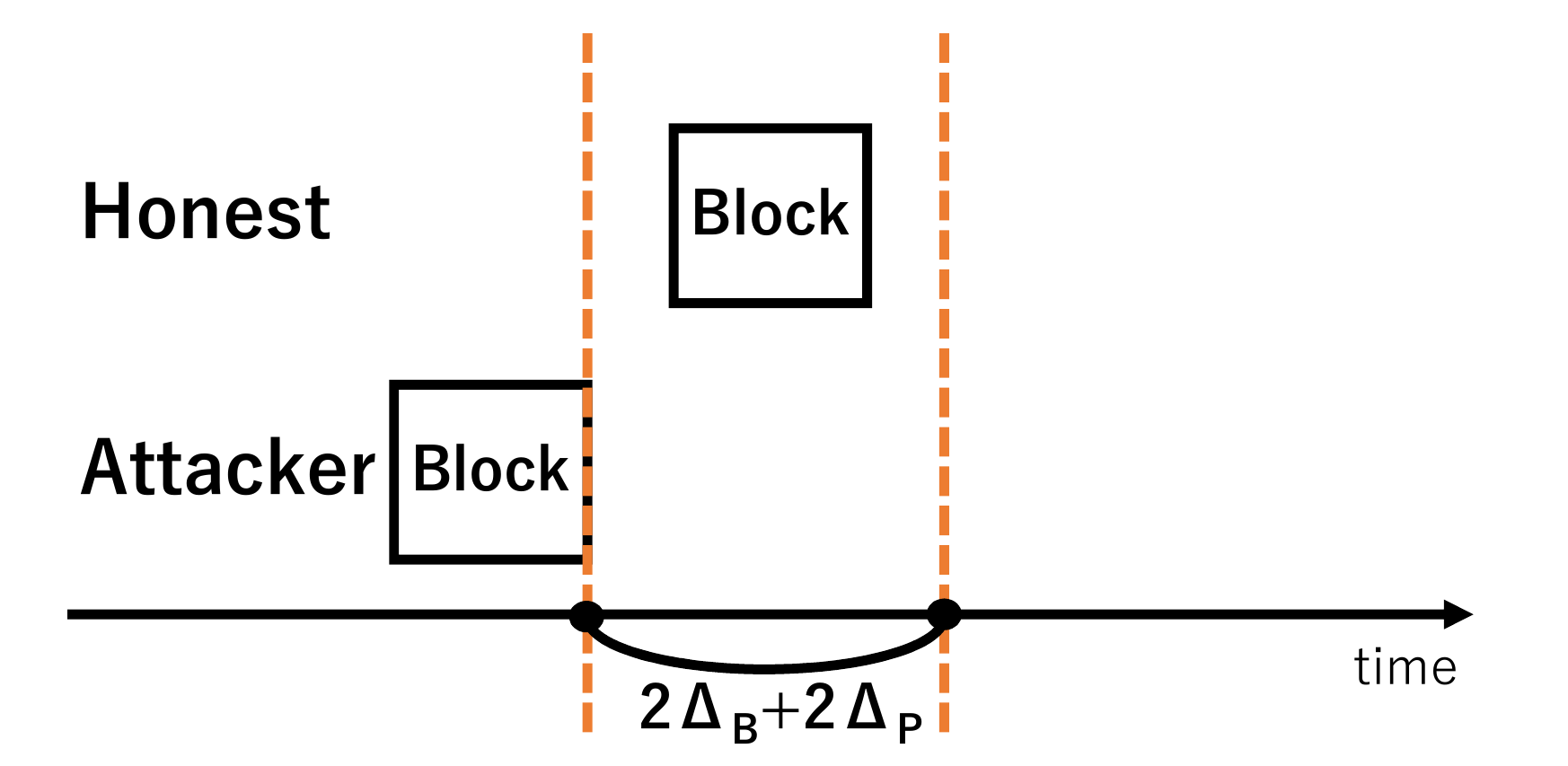}
        \caption{Situation that honest miners may lose.}
        \label{PPOWLOSE}
    \end{subfigure}

    \caption{If the attacker successfully generates a block, they must publish their partial PoW. This is because if an insufficiently shared PoW is included in a block, honest miners will assign a value of $-1$ to the block, leading to the attacker's block not being chosen in the event of a tie. It takes $2\Delta_B + 2\Delta_P$ for honest miners to include one partial PoW after it has been published. This duration includes $\Delta_P$ for the partial PoW to be propagated across the network and $2\Delta_B + \Delta_P$ for each miner to consider it sufficiently shared and include it in a block. Figure \ref{PPoWEVEN} shows the situation in which an honest miner's block is chosen randomly in a tie. This is because the attacker's shared partial PoWs that have been published are included in the block generated by the honest miner. Figure \ref{PPoWWIN} shows the situation in which an honest miner's block is chosen in a tie. Figure \ref{PPOWLOSE} shows the situation in which an honest miner's block may not be chosen in a tie. This is because the attacker's partial PoW that has been published may not be included in blocks honest miners are mining.}
    \label{PPoWSituation}
\end{figure*}

\subsection{Suppression of Match against post-generated block}
\label{matchagainstpost-generatedblock}
\subsubsection{Theoretical Analysis}
We theoretically demonstrate the effectiveness of our proposed method in suppressing Match against post-generated block. Specifically, we evaluate the proportion $\gamma$ of honest miners who mine on the attacker's chain when the attacker performs Match against post-generated block. $\gamma$ is the most important general indicator. Most attacks that use intentional forks, such as Selfish Mining, also use intentional chain ties. Suppressing intentional chain ties is the same as suppressing $\gamma$.

In this case, we assume that all honest miners adopt the proposed method. 

Let $\alpha$ denote the proportion of the attacker's hashrate and $T$ denote the average block generation interval across the entire network.
First, the following lemma holds:
\begin{lem} \label{lemma1}
    $\gamma$ satisfies the following inequality, on average:
    \begin{align*}
    \gamma \leq 1 -  \frac{n - 1}{n + \frac{\alpha}{1 - \alpha}} \exp (-\frac{2 \Delta_B + 2 \Delta_P}{T}).
    \end{align*}
\end{lem}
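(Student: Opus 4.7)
The plan is to lower-bound $q := 1 - \gamma$, the probability that an honest miner strictly prefers its own block to the attacker's matched block. Fix a Match against post-generated block scenario and let $t_h$ be the time the honest miner mines $B_h$ and $t_a \leq t_h$ be the time the attacker mined the withheld block $B_a$. Using Theorems 1--3 and the propagation-bound assumptions, I first characterize the two shared-partial-PoW sets: $S_h$ contains every valid partial PoW arriving at the honest miner before $t_h - 2\Delta_B - \Delta_P$, while $S_a$ is restricted to partial PoWs known to the attacker at $t_a$ that will be sufficiently shared at every honest miner's check time (at most $t_h + \Delta_B$). Combined with the bound $\Delta_P$ on partial-PoW propagation, this pins the effective cutoff of $S_a$ to at most time $t_a$. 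I then exhibit a deterministic ``good event'': any honest pure (non-block) partial PoW generated in the window $(t_a,\, t_h - 2\Delta_B - 2\Delta_P]$ lies in $S_h$ but cannot lie in $S_a$, so $|S_h| > |S_a|$ and the honest chain strictly wins the tie-break.

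I would then bound the probability of the good event using the Poisson-process model with the rates in the variable table, decomposing it into two factors that correspond exactly to the two pieces of the stated bound. The first factor is the probability that no full block by anyone is generated in the interval of length $2\Delta_B + 2\Delta_P$ preceding $t_h$, which equals $\exp(-(2\Delta_B + 2\Delta_P)/T)$ since full blocks form a rate-$1/T$ Poisson process; this forces $t_a < t_h - (2\Delta_B + 2\Delta_P)$ so the good-event window has positive length. The second factor is the probability that the most recent event of the superposed Poisson process of honest partial PoWs (rate $n(1-\alpha)/T$) and attacker blocks (rate $\alpha/T$) is an honest pure partial PoW, which by standard superposition reasoning equals
\[
\frac{(n-1)(1-\alpha)/T}{n(1-\alpha)/T + \alpha/T} \;=\; \frac{n-1}{n + \alpha/(1-\alpha)}.
\]
Multiplying these two pieces yields the lower bound $q \geq \frac{n-1}{n + \alpha/(1-\alpha)}\exp(-(2\Delta_B + 2\Delta_P)/T)$, i.e.\ the claimed upper bound on $\gamma$.

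The main obstacle I anticipate is carefully ruling out adversarial deviations from the attacker. The attacker may continually re-commit its candidate block to the freshest available partial PoWs, may strategically delay the publication of its own pure partial PoWs so they land in $S_a$ but not $S_h$ (the ``lose'' scenario of Figure~\ref{PPOWLOSE}), and may freely time when to trigger the match. The bulk of the proof effort will be to show that, averaged over the Poisson process of honestly generated partial PoWs, none of these adaptive strategies can erase the surplus in $|S_h|$ over $|S_a|$ produced by an honest partial PoW landing in the identified good-event window.
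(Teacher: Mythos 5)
Your proposal is, in substance, the paper's own proof run backwards in time: your two factors coincide exactly with the paper's $\Pr[\mathrm{Cond}(1)] \geq \frac{n-1}{n+\frac{\alpha}{1-\alpha}}$ (an honest pure partial PoW occurring before any full block, with the attacker's own partial PoWs acting as neutral delays in the geometric race) and $\Pr[\mathrm{Cond}(2)] \geq \exp\left(-\frac{2\Delta_B+2\Delta_P}{T}\right)$ (no block during the $\Delta_P$ propagation plus $2\Delta_B+\Delta_P$ incorporation window), multiplied in the same way to bound $\gamma \leq 1 - \Pr[\mathrm{Cond}(1)]\Pr[\mathrm{Cond}(2)]$. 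The adversarial deviations you defer to future effort (withheld or strategically delayed partial PoWs, the scenario of Figure~\ref{PPOWLOSE}) are precisely what the paper disposes of via the parameter-setting theorems of Section~\ref{parameter} and the rule assigning $-1$ to any chain containing an insufficiently shared partial PoW, which forces publication at block-generation time, so your route and the paper's are essentially the same.
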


\begin{proof}
    As explained in Section \ref{parameter}, in a system in which all honest miners adopt the proposed method, if an attacker initiates a Match against post-generated block, they must publish the shared partial PoW corresponding to their block at the time of block generation. Otherwise, honest miners will not select the attacker's block. 
    
    In a system in which all honest miners adopt our proposed method and an attacker initiates Match against post-generated block, the following conditions (Condition (1) and Condition (2)) can be considered sufficient for the chain that is generated by honest miners to be selected among the chains in a tie (Figure \ref{PPoWWIN}).
    \begin{description}
    \item[\textbf{Condition (1)}:] \mbox{}\\ An honest miner generates a partial PoW after the attacker has generated a block and before the next block is generated.
    \item[\textbf{Condition (2)}:] \mbox{}\\ Before the next block is generated, all honest miners incorporate the partial PoW that is generated under Condition (1) into the shared partial PoW of their block and proceed with mining.
    \end{description}

    The attacker does not need to publish a partial PoW after generating a block for Match against post-generated block. Thus, the probability of Condition (1), which is denoted as $Pr[ Cond(1)]$, satisfies the following inequality:
    \begin{align}
        Pr[ Cond (1)] &\geq \Sigma_{k = 0}^\infty \frac{(1 - \alpha) (n - 1)}{n}(\frac{\alpha (n - 1)}{n})^k \\
        &= \frac{n - 1}{n + \frac{\alpha}{1 - \alpha}}
    \end{align}
    The probability of the attacker first generating a partial PoW k times, followed by an honest miner generating a partial PoW, is expressed as $\frac{(1 - \alpha) (n - 1)}{n}(\frac{\alpha (n - 1)}{n})^k$. The right side of the inequality represents the probability that the attacker does not publish partial PoWs. Therefore, the value is higher when an attacker publishes partial PoWs.

    The probability of Condition (2), which is denoted as $Pr[ Cond(2)]$, can be expressed as follows:
    \begin{align}
      Pr[ Cond(2)] &= (1 - p)^{N_i}\label{genetime}\\
              &\approx \exp (-p N_i) \label{approx1}\\
              &= \exp (-p (\Sigma_{j \in V} P_{ij} M_j + (2 \Delta_B + \Delta_P)M_{all})) \label{waitime} \\
              &=\exp (-\Sigma_{j \in V} \frac{P_{ij}}{T} \frac{M_j}{M_{all}}  - \frac{2 \Delta_B + \Delta_P}{T}) \label{pin} \\
              &\geq \exp (-\frac{2 \Delta_B + 2 \Delta_P}{T}) \label{round}
    \end{align}
    where $V$ denotes the set of miners participating in the system. $N_i$ represents the sum of the hash computations by miners who do not include a partial PoW generated by miner $i$ into shared partial PoWs. In addition, $p$ is the probability of generating a valid block in a single hash calculation. $P_{ij}$ indicates the time that is required for miner $j$ to receive the partial PoW after it has been successfully generated by miner $i$. $M_i$ represents the hashrate of miner $i$.

    Equation \ref{genetime} represents the probability that a valid block is not generated after $N_i$ hash calculations. Approximate equation \ref{approx1} is an approximation of Equation \ref{genetime}, based on $p << 1$ and $N_i >> 1$. Equation \ref{waitime} is derived by substituting $N_i = (\Sigma_{j \in M} P_{ij} M_j + (2 \Delta_B + \Delta_P)M_{all}$ into Equation \ref{approx1}. Here, $P_{ij} M_j$ is the total number of hash calculations performed by miner $j$ until the partial PoW arrives. The second term on the right side indicates that it takes each miner $2 \Delta_B + \Delta_P$ to include the received partial PoW in their shared partial PoW and start mining. Equation \ref{pin} is obtained by substituting $p = 1 / {M_{all} T}$. Inequality \ref{round} holds under the condition $P_{ij} < \Delta_P$.

    From the aforementioned, the probability (Pr[Cond(1)]Pr[Cond(2)]) that an honest miner selects a block that is generated by an honest miner instead of one that is generated by the attacker satisfies the following inequality:
    \begin{align}
     Pr[Cond(1)]Pr[Cond(2)] \geq \frac{n - 1}{n + \frac{\alpha}{1 - \alpha}} \exp (-\frac{2 \Delta_B + 2 \Delta_P}{T})
    \end{align}

    Therefore, when the attacker initiates Match against post-generated block, the probability that honest miners will select the attacker's chain is less than or equal to$\frac{n - 1}{n + \frac{\alpha}{1 - \alpha}} \exp (-\frac{2 \Delta_B + 2 \Delta_P}{T})$. Consequently, the average value of $\gamma$ satisfies the following inequality:
    \begin{align}
    \gamma &\leq 1 -  Pr[Cond(1)]Pr[Cond(2)] \\
          &\leq 1 -  \frac{n - 1}{n + \frac{\alpha}{1 - \alpha}} \exp (-\frac{2 \Delta_B + 2 \Delta_P}{T}) 
    \end{align}
\end{proof}

The term $\Sigma_{j \in V}\frac{P_{ij}M_j}{M_{all}}$ in Equation \ref{pin} represents the hashrate-weighted average propagation time of a partial PoW when it is generated by miner $i$. It is widely known that the block propagation time and fork rate are deeply connected\cite{informationpropagation}. The relationship between the hashrate-weighted propagation time $T_W$ and the fork rate $F$ can be expressed as $F = T_W / T$\cite{ImpactoftheHashRateontheTheoreticalForkRateofBlockchain}. The fork rate refers to the probability of another miner creating a block that forks the blockchain while a certain block is still propagating through the network. Therefore, knowing the fork rate enables the estimation of the hashrate-weighted propagation time of a block. Although determining the current value of Bitcoin is challenging, previous research\cite{onTheSecurity} has shown the fork rate of Bitcoin to be approximately 0.41\%. This implies a hashrate-weighted propagation time of approximately 2.4 s, which is significantly less than the measured 100th percentile propagation time of a block\cite{statsofbitcoin}. When applying our proposed method to Bitcoin, each partial PoW would be 80 bytes, which is considerably smaller than a block, suggesting that the hashrate-weighted propagation time of partial PoW could be even smaller. Consequently, the actual value of $Pr[Cond(2)]$ is likely to be closer to $\exp (\frac{2 \Delta_B + \Delta_P}{T})$ than the value in inequality \ref{round}.

Next, we examine the scenario in which a tie occurs. Specifically, the following lemma holds:
\begin{lem}\label{lemma2}
    $\gamma$ satisfies the following inequality, on average:
    \begin{align*}
    \gamma \leq 1 -  \frac{1}{2} \exp (-\frac{2 \Delta_B + 2 \Delta_P}{T}).
    \end{align*}
\end{lem}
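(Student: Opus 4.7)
The plan is to combine the same no-block-generated probability used for Condition (2) in the proof of Lemma \ref{lemma1} with a tie-breaking factor of $1/2$, rather than with the honest-partial-PoW-race factor $(n-1)/(n+\alpha/(1-\alpha))$ coming from Condition (1). The key observation is that the attacker's withheld block, in order to avoid being assigned the value $-1$ under the proposed method, must be accompanied by a shared partial PoW whose every element has already been published long enough to be considered sufficiently shared by every honest miner. Thus every partial PoW that contributes to the attacker's chain's score is, by the time the chain tie is evaluated, also known to every honest miner.

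Building on this, I argue that whenever an idle window of length $2\Delta_B + 2\Delta_P$ elapses between the attacker's publishing of its partial PoWs and the next block generated anywhere in the network, every publicly announced partial PoW of the attacker becomes sufficiently shared from the perspective of every honest miner and is incorporated into the shared partial PoW field that each honest miner is currently mining, precisely the configuration depicted in Figure \ref{PPoWEVEN}. When the honest miner eventually generates a competing block, the shared partial PoW set of that block is therefore a superset of the attacker's set, so the honest chain's assigned value is at least as large as the attacker's. At worst this yields a genuine tie, which the proposed method breaks uniformly at random, so the honest chain is selected with probability at least $1/2$.

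The probability of such an idle window is already bounded in the proof of Lemma \ref{lemma1} via the chain of steps (\ref{genetime})--(\ref{round}) as $\exp(-(2\Delta_B + 2\Delta_P)/T)$; that derivation never used Condition (1) and carries over verbatim here. Multiplying by the tie-breaking factor gives $Pr[\text{honest chain selected}] \geq \tfrac{1}{2}\exp(-(2\Delta_B + 2\Delta_P)/T)$, and the claimed bound on $\gamma$ follows by complementation.

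The main obstacle I anticipate is ruling out the possibility that the attacker, through strategic withholding of individual partial PoWs or propagation asymmetries, engineers a block whose shared partial PoW set is strictly larger than whatever the honest miner can include at mining time. I plan to resolve this using the sufficiently-shared requirement: anything counted on the attacker's side must have been received by every honest miner at least $2\Delta_B$ before their acceptance window closes, which, under the idle-window hypothesis, also guarantees its inclusion in each honest miner's own shared partial PoW field. Hence the honest miner's set dominates the attacker's, the worst case really is a tie, and the $1/2$ random tie-break factor is the tightest loss the honest side can incur.
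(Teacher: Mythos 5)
Your proposal is correct and follows essentially the same route as the paper's own proof of Lemma~\ref{lemma2}: the attacker is forced to publish the partial PoWs in its block's shared partial PoW set at block generation, an idle window of $2\Delta_B + 2\Delta_P$ (probability $\exp(-\frac{2\Delta_B + 2\Delta_P}{T})$) ensures every honest miner has incorporated them into its own shared partial PoW (the Figure~\ref{PPoWEVEN} configuration), and the worst case is then a tie broken uniformly at random, giving the factor $\frac{1}{2}$. Your only (harmless) refinement is to argue that the honest block's set is a superset of the attacker's, whereas the paper asserts equality of size; the resulting bound is identical.
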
 
\begin{proof}
    In a system in which all miners adopt the proposed method, upon generating a block an attacker must publish shared partial PoWs that belong to the set corresponding to the shared partial PoW of the block. $2\Delta_B + 2 \Delta_P$ after the attacker publishes these partial PoWs, all honest miners include the attacker's published partial PoWs in their shared partial PoW and engage in mining. Thus, if no new blocks are generated within $2\Delta_B + 2 \Delta_P$ after the attacker has generated a block, the honest miners will have at least one shared partial PoW of the same size as the attacker's chain (Figure \ref{PPoWEVEN}). In addition, a chain is randomly selected in the proposed method if the number of shared partial PoWs is equal. Considering these factors, the following inequality is established:
    \begin{align}
     \gamma \leq 1 -  \frac{1}{2} \exp (-\frac{2 \Delta_B + 2 \Delta_P}{T}) 
\end{align}
\end{proof}
According to Lemmas \ref{lemma1} and \ref{lemma2}, the following theorem holds:
\begin{thm}\label{third1}
$\gamma$ satisfies the following inequality, on average:
\begin{align*}
    \gamma &\leq \min( 1 -  \frac{1}{2} \exp (-\frac{2 \Delta_B + 2 \Delta_P}{T}) , \\& 1 -  \frac{n - 1}{n + \frac{\alpha}{1 - \alpha}} \exp (-\frac{2 \Delta_B + 2 \Delta_P}{T})).
\end{align*}
\end{thm}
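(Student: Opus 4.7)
The plan is to obtain Theorem \ref{third1} as an immediate consequence of the two preceding lemmas, essentially by observing that two independently valid upper bounds on the same quantity imply an upper bound by their minimum. Both Lemma \ref{lemma1} and Lemma \ref{lemma2} bound the same quantity, namely the average value of $\gamma$ (the proportion of honest miners mining on the attacker's chain during a Match against post-generated block), under the same hypotheses (all honest miners adopt the proposed method, with the parameter choices $w=\Delta_B$ and shared/sufficiently-shared thresholds fixed as in Section \ref{parameter}). So no new probabilistic argument is required; I only need to combine them.

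Concretely, I would write: by Lemma \ref{lemma1}, on average
\begin{align*}
\gamma \leq 1 - \frac{n-1}{n + \frac{\alpha}{1-\alpha}} \exp\!\left(-\frac{2\Delta_B + 2\Delta_P}{T}\right),
\end{align*}
and by Lemma \ref{lemma2}, on average
\begin{align*}
\gamma \leq 1 - \frac{1}{2} \exp\!\left(-\frac{2\Delta_B + 2\Delta_P}{T}\right).
\end{align*}
Since both inequalities are valid simultaneously under the identical set of assumptions, $\gamma$ is bounded by the smaller of the two right-hand sides, which gives exactly the claimed $\min(\cdot,\cdot)$ expression. I would then add a brief qualitative remark that the first bound is tighter when the Condition (1) probability $(n-1)/(n+\alpha/(1-\alpha))$ exceeds $1/2$ (roughly, when $n$ is large relative to $\alpha/(1-\alpha)$), while the second bound dominates in the opposite regime, which is why both need to be stated.

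There is essentially no hard step here, since the real work was done in the two lemmas: Lemma \ref{lemma1} handled the case where an honest miner's partial PoW propagates and is incorporated into other honest miners' shared partial PoWs before the next block, and Lemma \ref{lemma2} handled the "tie of shared partial PoW counts" case resolved by random selection. The only subtlety worth flagging is making sure that the two bounds genuinely apply to the same random quantity under the same model, so that taking their pointwise minimum is legitimate; this is the case because both proofs condition on the same event (the attacker initiating Match against post-generated block) and use the same synchrony assumptions, so a single sentence verifying this is enough to close the argument.
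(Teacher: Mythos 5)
Your proposal is correct and matches the paper exactly: the paper offers no separate proof of Theorem \ref{third1}, stating only that it follows ``according to Lemmas \ref{lemma1} and \ref{lemma2},'' which is precisely your observation that two simultaneous upper bounds on the same average quantity yield their minimum as a bound. Your added remark about which regime makes each bound tighter is a small, harmless elaboration beyond what the paper says.
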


\begin{figure}[tb]
\begin{center}
\includegraphics[width=0.5\textwidth]{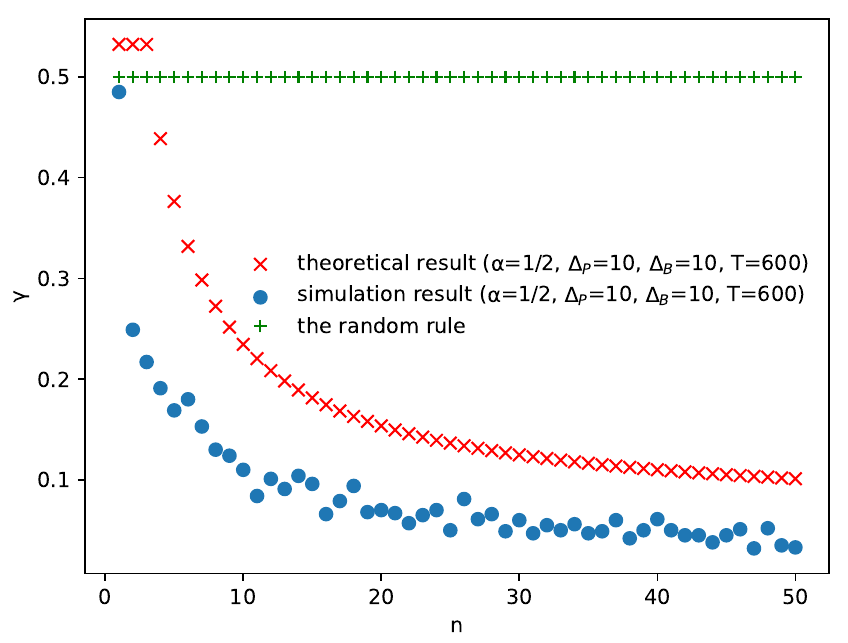}
\end{center}
\caption{Impact of difficulty adjuster $n$ on effectiveness of proposed method with $\alpha = 0.5$, $T = 600$, and $\Delta_B = \Delta_P = 10$.}
\label{gamma1}
\end{figure}

\begin{figure}[tb]
\begin{center}
\includegraphics[width=0.5\textwidth]{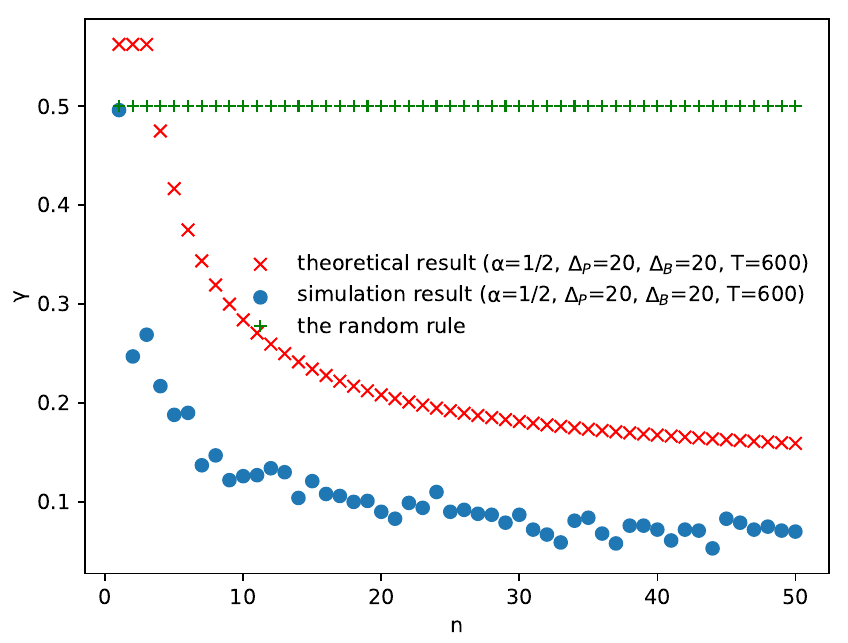}
\end{center}
\caption{Impact of difficulty adjuster $n$ on effectiveness of proposed method with $\alpha = 0.5$, $T = 600$, and $\Delta_B = \Delta_P = 20$.}
\label{gamma2}
\end{figure}

Next, we quantitatively analyze the proposed method.
We consider two parameter setting scenarios: $\Delta_B = \Delta_P = 10$ and $\Delta_B = \Delta_P = 20$. The first scenario, where $\Delta_B = \Delta_P = 10$, is based on the current block propagation time in Bitcoin\cite{statsofbitcoin}. The second scenario, with $\Delta_B = \Delta_P = 20$, is selected to assess the performance of the proposed method under more relaxed settings and to account for the impact of clock drift in the system. This approach enables a comprehensive evaluation of the effectiveness of the method under various network conditions and parameter configurations.

Figures \ref{gamma1} and \ref{gamma2} depict the variations in $\gamma$ for different difficulty adjusters $n$, given the proportion of an attacker's hashrate $\alpha$ of 0.5 and the average block generation interval $T$ of 600. Figure \ref{gamma1} shows the case in which $\Delta_B = \Delta_P = 10$, whereas Figure \ref{gamma2} shows the results for $\Delta_B = \Delta_P = 20$. 
As is evident from the figures, the value of $\gamma$ decreases as the difficulty adjuster $n$ increases. However, increasing $n$ also implies an increase in communication overhead. Thus, varying the difficulty adjuster according to the condition of the system is crucial. Additionally, the figures reveal that our proposed method reduces $\gamma$ more effectively than the random rule when $n$ is greater than 3. 
Specifically, for $n$ = 50 with $\Delta_B = \Delta_P = 20$, $\gamma$ is reduced to 0.10118 or less, and for $\Delta_B = \Delta_P = 20$, $\gamma$ is reduced to 0.15915 or less.

\begin{figure}[tb]
\begin{center}
\includegraphics[width=0.5\textwidth]{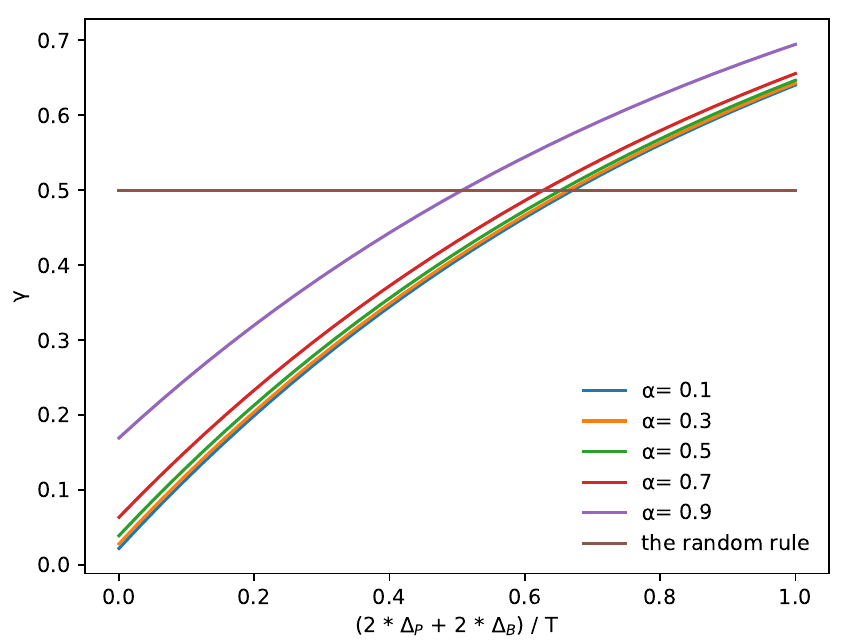}
\end{center}
\caption{Impact of $\frac{2\Delta_B + 2\Delta_P}{T}$ on effectiveness of proposed method at $n = 50$.}
\label{alpha}
\end{figure}

Figure \ref{alpha} depicts the effectiveness of the proposed method when the difficulty adjuster $n$ is fixed at 50 and the value $\frac{2\Delta_B + 2\Delta_P}{T}$ is varied. As illustrated in the figure, the effectiveness of the proposed method decreases as $\frac{2\Delta_B + 2\Delta_P}{T}$ increases. However, it is also apparent that the proposed method is more effective than the random approach for systems with shorter block generation intervals than Bitcoin, such as Litecoin (with a block generation interval of 2.5 min)\cite{Litecoin2023}.
The figure also indicates that the effectiveness of our method decreases as the proportion of the attacker's hashrate increases. This diminished effectiveness is attributed to the reduction in the number of partial PoWs that are publicly available on the network. Increasing the difficulty adjuster $n$ can easily address this issue.

\begin{figure}[tb]
\begin{center}
\includegraphics[width=0.5\textwidth]{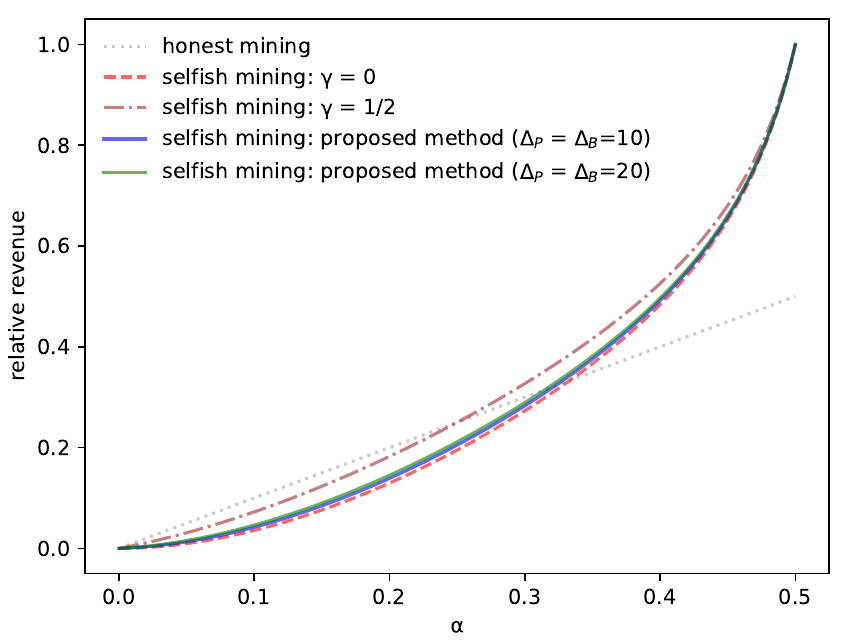}
\end{center}
\caption{Analysis of relative revenue of the attacker in SM under proposed method at $n = 50$, $T = 600$.}
\label{threshold}
\end{figure}
SM is the most important and simple example that uses intentional chain ties. Figure \ref{threshold} demonstrates the relative revenue of the attacker performing SM in a system in which all miners adopt our proposed method, with the difficulty adjuster $n$ fixed at 50 and the block generation interval $T$ set to 600. The blue line represents scenarios with $\Delta_B = \Delta_P = 10$, and the green line represents cases with $\Delta_B = \Delta_P = 20$. For $\Delta_B = \Delta_P = 10$, the threshold of the proportion of the attacker's hashrate at which the relative revenue of the attacker executing SM becomes larger than that in the normal protocol is approximately $\alpha = 0.32246$. For $\Delta_B = \Delta_P = 20$, the threshold is approximately $\alpha = 0.31479$. The random rule is another technique that can be easily integrated into existing systems. However, when the random rule is applied, the threshold of the proportion of the attacker's hashrate at which their revenue for executing SM surpasses that in the normal protocol is established at $\alpha = 0.25$. This indicates that our proposed method is more effective than the random rule in suppressing SM from the perspective of reducing the relative revenue of the attacker.

\subsubsection{Simulation Analysis}
We further evaluate our method through simulation analysis.
The evaluation of the effectiveness of the method in Lemma \ref{lemma1} only considers situations in which all honest miners hold a shared partial PoW that is larger than the shared partial PoW of the attacker's block. However, as explained after the proof of Lemma \ref{lemma1}, a more stringent evaluation is possible by considering the hashrate-weighted propagation time of partial PoW, which is not considered in Lemma \ref{lemma1}. In addition, the current assessment fails to consider the effect of each miner holding a shared partial PoW of the same size as the attacker's block, as indicated in the evaluation presented in Lemma \ref{lemma2}.
Another overlooked aspect is the scenario in which honest miners successfully generate a block while some miners have not yet received the partial PoW that is published by the attacker. In such cases, miners, who have not received the attacker's partial PoW, would select the block that is generated by an honest miner instead of the attacker's block during a chain tie. This factor is also not considered in the evaluation presented in Lemma \ref{lemma2}.

A theoretical evaluation that accounts for all factors is challenging. In addition, conducting experiments in a real network is also challenging. Therefore, we conducted a simulation experiment to assess the proposed method. The simulation allows us to consider both scenarios: when all honest miners possess a shared partial PoW that is larger than the attacker's shared partial PoW, and when they have a shared partial PoW that is the same size as that of the attacker. We used SimBlock\cite{simblock}, which is a tool that can simulate a blockchain network. The default network parameters of SimBlock for the year 2019 were used. For simplicity, we did not apply Compact Block Relay\cite{bip152}, which is the default setting in SimBlock. The number of miners was set to 300, and the block generation interval $T$ was set to 600 s. The block size was adjusted to approximately 200,000 bytes and 500,000 bytes so that the 100\%ile propagation time of the block was approximately 10 s and 20 s. For simplicity, the size of partial PoW was also set to 200,000 bytes and 500,000 bytes.

We randomly selected one miner from the network participants to act as the attacker in the simulation. This attacking miner was set to possess half of the total hashrate of the entire network. The attacker could receive messages from honest miners instantly and when the attacker published a message, it was instantly and simultaneously delivered to all other miners.

In our simulation, the attacker basically did not publish a partial PoW to the network, even after generating it. Upon receiving a partial PoW, the attacker instantly incorporated it into their shared partial PoW and continued mining. In addition, when the attacker successfully generated a block, they did not publish it immediately, but continued mining. The attacker published the withheld partial PoW to the network after a delay of $\Delta_B$ from successful block generation. If the attacker consecutively succeeded in generating blocks and the difference between the chain of the attacker and honest miners reached two, the attacker published both their block and the shared partial PoW. When an honest miner succeeded in generating a block, the attacker instantly initiated a Match against post-generate block.

For simplicity, in our simulation, we assumed that honest miners did not check whether the set of partial PoWs corresponding to the attacker's shared partial PoW was sufficiently shared during a chain tie. However, unlike the attacker, honest miners included a partial PoW in their shared partial PoW for mining only after $2\Delta_B + \Delta_P$ had elapsed since receiving the partial PoW.

The effectiveness of the proposed method in the real network is better than the simulation result from the viewpoint of suppressing intentional chain ties. This is because the settings for the simulation described above were configured to be advantageous for the attacker. For example, the actual size of a partial PoW is 80 bytes, but we increased it to 200,000 and 500,000 bytes in our simulation. This enlargement inevitably lengthened the propagation time of the partial PoW, thereby reducing the likelihood that honest miners would hold a partial PoW larger than that of the attacker, compared to actual circumstances. Moreover, attackers in our simulation could instantly incorporate any received partial PoW into their shared partial PoW. This action could potentially lead to the inclusion of insufficiently shared partial PoWs in the shared partial PoW. However, in our setup, honest miners did not verify whether the partial PoWs were sufficiently shared. This increased the probability that the attacker’s shared partial PoW would be larger than that of the honest miners. In addition, although the attacker published their shared partial PoW upon block generation, this publication only occurred after a delay of $\Delta_P$. In reality, honest miners may generate a block during the propagation of shared partial PoW, and some honest miners may be unaware of this shared partial PoW. However, this risk was disregarded in our simulation.

Figures \ref{gamma1} and \ref{gamma2} show the results of the simulation experiments of 1000 times of chain ties. The data clearly show that the effectiveness of the proposed method, as measured by these simulations, exceeded the values obtained in the theoretical analysis. Specifically, when n = 50 and $\Delta_B = \Delta_P = 10$, the value of $\gamma$ is 0.033. In contrast, for $\Delta_B = \Delta_P = 20$, $\gamma$ increased to 0.07.

\subsection{Suppression of Match against pre-generated block}
\label{matchagainstpre-generatedblock}
\begin{figure}[tb]
\begin{center}
\includegraphics[width=0.5\textwidth]{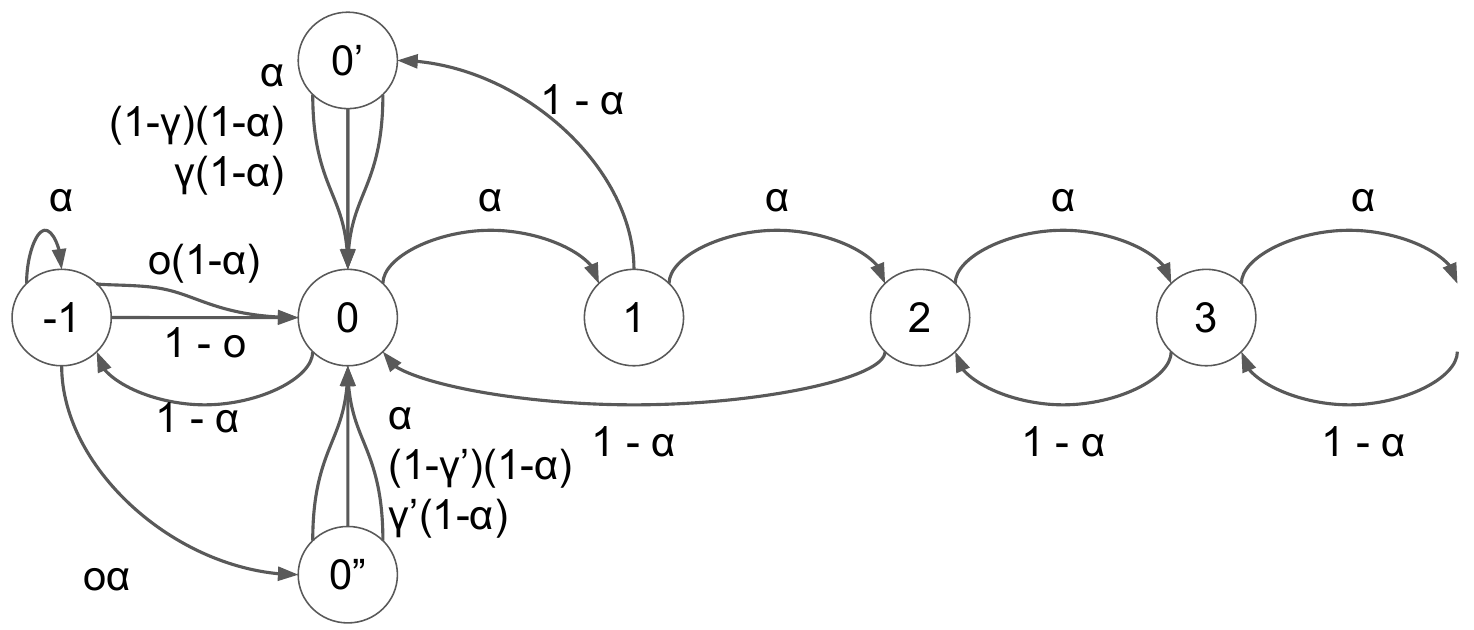}
\end{center}
\caption{State transition diagram for extended selfish mining.}
\label{extended selfish mining}
\end{figure}

As noted in Section \ref{intentionalfork}, introducing an acceptance window is necessary to suppress Match against pre-generated block attacks initiated by attackers in last-generated rules, including our proposed method. This section quantitatively demonstrates the effect of the acceptance window parameter settings on the effectiveness of the proposed method. Specifically, we introduce ESM that considers Match against pre-generated block.

ESM is an extension of SM that incorporates Match against pre-generated block.
When honest miners and an attacker that is engaged in ESM share the same blockchain, the attacker continues to mine on the same block even after an honest miner has successfully generated a block. The duration of this continued mining by the attacker is known as the unresponsive time $s$. If the attacker can generate a block within the unresponsive time, a Match against pre-generated block occurs. If the attacker fails to generate a block within the unresponsive time, they accept and mine on the honest miner's chain. The same acceptance and subsequent mining on the honest miner's chain occurs if the honest miner generates a block within the unresponsive time. The remaining mining strategy for the attacker is the same as that in SM.

We modeled a network in which an attacker performed ESM using a Markov decision process. We assumed that honest miners adopted the proposed method. For simplicity, we assumed that natural chain ties did not occur. The reward for each miner was based on the number of blocks that they generated in the main chain. We set the average block-generation interval $T$ to 600 for these evaluations.
Each state was determined by the difference in the number of blocks that were mined by the attacker and honest miners, and whether a chain tie had occurred. State transitions occurred based on the block generation by each miner or the time elapsed since the block generation.

The specific states were as follows:
\begin{align}
-1, 0, 0', 0'', 1, 2, 3, \ldots
\end{align}
The number of each state indicates the difference in the block height between the blocks that were mined by the attacker and honest miners. Figure \ref{extended selfish mining} shows the state transition diagram for ESM. The base state is $0$, in which both honest miners and attackers mine the same chain. When an honest miner generates a block in state $0$, the system transitions to state $-1$. In state $-1$, the attacker continues to mine the same block even after an honest miner has generated a block. If the attacker successfully generates a block within the unresponsive time in state $-1$, a Match against pre-generated block is triggered, leading to state $0''$. In state $0''$, a proportion $\gamma'$ of honest miners selects and mines on the attacker's chain. The system transitions from state $0''$ to state $0$ after a new block is generated. If the unresponsive time is exceeded in state $-1$, the attacker assumes that winning a chain tie is impossible, and accepts the honest miner's chain, thereby transitioning back to state $0$. The probability that a block is generated within the unresponsive time is denoted by $o$. States $1, 2, \ldots$ represent scenarios in which the chain of the attacker is longer than that of honest miners. In this case, the attacker does not publish their chain. If an honest miner generates a block in state $1$, the attacker publishes their block and triggers a Match against post-generated block. This transition moves from state $1$ to state $0'$, where the proportion $\gamma$ of honest miners selects and mines on the chain of the attacker. The system moves from state $0'$ to state $0$ after a new block is generated. If an honest miner generates a block in state $2$, the attacker publishes all their blocks, thereby invalidating the honest miner's chain.

The value of $o$ is defined as
\begin{align}
o = 1 - \exp(- \frac{s}{T})
\end{align}
where $o$ represents the probability that a new block is generated by the network within the duration of $s$, which is the time from when the attacker receives the block from an honest miner to the end of the unresponsive time. 

The relative revenue $R$ for the attacker performing ESM is calculated as follows:
\tiny
\begin{align}
R =& \frac{o\alpha(1-\alpha)(1 - 2 \alpha)(2\alpha + \gamma'(1-\alpha))}{(1 - o(1-\alpha))(\alpha^3 - 4\alpha^2 + 2 \alpha) + (1+o\alpha)(1-2\alpha)(1-\alpha)} \nonumber\\
& + \frac{(1 - o(1-\alpha))(4\alpha^4 + -9\alpha^3 + 4\alpha^2 + \gamma \alpha(1-2\alpha)(1-\alpha)^2))}{(1 - o(1-\alpha))(\alpha^3 - 4\alpha^2 + 2 \alpha) + (1+o\alpha)(1-2\alpha)(1-\alpha)}
\end{align}
\normalsize

Candidates for the unresponsive time $s$ of the attacker include the end of the acceptance window of the block generator $\Delta_B$ and the end of the acceptance window for all miners $2\Delta_B$. If the unresponsive time is defined as the duration until the acceptance window of the block generator ends, $\gamma'$ will always be 1.
However, if the unresponsive time is defined as the duration until the acceptance window for all miners ends, $\gamma'$ will not necessarily be 1 even if the attacker successfully generates a block and causes a chain tie. In the following analysis, we assume that $\gamma' = 1$, which is advantageous for the attacker and yields tighter results.

\begin{figure}[tb]
\begin{center}
\includegraphics[width=0.5\textwidth]{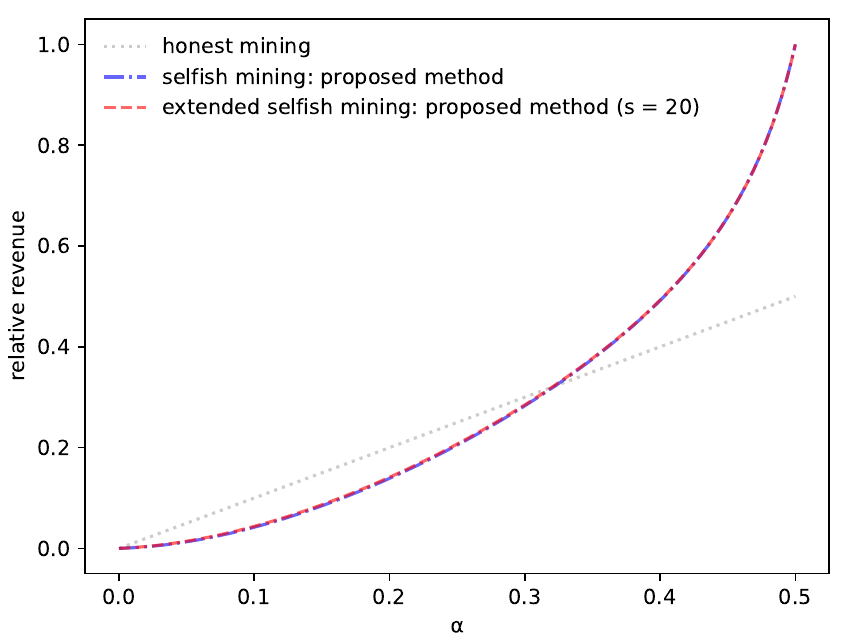}
\end{center}
\caption{Analysis of relative revenue of the attacker in ESM under proposed method at $s = 20$, $n = 50$, $T = 600$ and $\Delta_B = \Delta_P = 10$.}
\label{$s = 20$}
\end{figure}

\begin{figure}[tb]
\begin{center}
\includegraphics[width=0.5\textwidth]{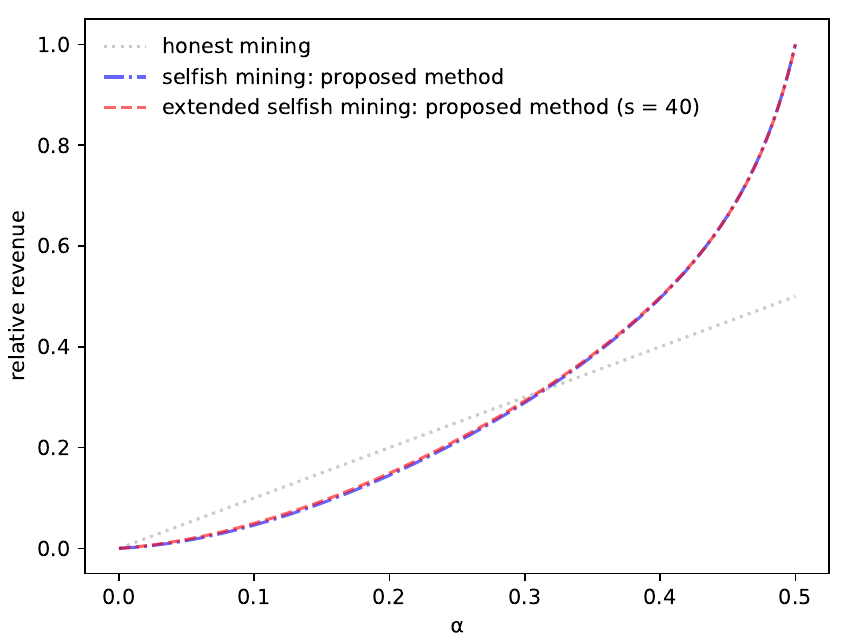}
\end{center}
\caption{Analysis of relative revenue of the attacker in ESM under proposed method at $s = 40$, $n = 50$, $T = 600$ and $\Delta_B = \Delta_P = 20$.}
\label{$s = 40$}
\end{figure}

We analyzed with various settings for the unresponsive time and different values of $\Delta_B$ and $\Delta_P$. The results are shown in Figures \ref{$s = 20$} and \ref{$s = 40$}. As seen in the figures, there was almost no difference in the relative revenue of the attacker between the SM and ESM. The required proportion of the attacker's hashrate for successful SM was approximately $\alpha = 0.32246$ with $s = 20$ and $\Delta_B = \Delta_P = 10$, and approximately $\alpha = 0.31479$ with $s = 40$ and $\Delta_B = \Delta_P = 20$, whereas it was approximately $\alpha = 0.32045$ and $\alpha = 0.31055$, respectively, ESM. These results demonstrate that setting the acceptance window $w$ appropriately can mitigate Match against pre-generated block.

\begin{figure}[tb]
\begin{center}
\includegraphics[width=0.5\textwidth]{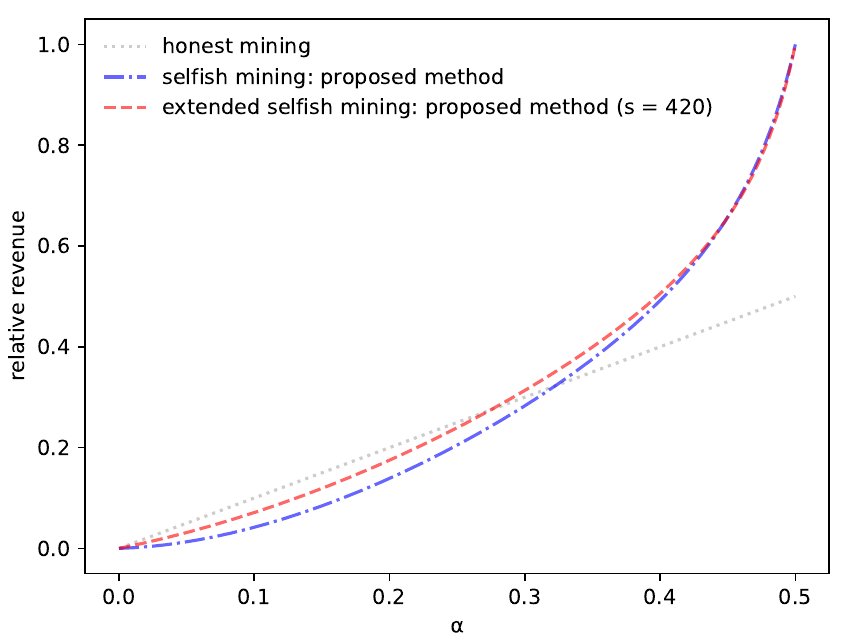}
\end{center}
\caption{Analysis of relative revenue of the attacker in ESM under proposed method at $s = 420$, $n = 50$, $T = 600$ and $\Delta_B = \Delta_P = 10$.}
\label{$s = 420$}
\end{figure}

We also analyzed cases with extremely long acceptance windows. Figure \ref{$s = 420$} shows the relative revenue of the attacker when the unresponsive time was set to $s = 420$. It can be observed that the attacker achieved a higher relative revenue with ESM than with standard SM. The necessary proportion of the attacker's hash rate for a successful attack in this scenario was approximately 0.27469. This result indicates that failing to set the acceptance window appropriately increases the risk of Match against pre-generated block, thereby compromising the network security.
\section{Implementation Details} \label{detail}
\subsection{Application to Bitcoin}\label{applicationToSystems}
In this section, we discuss the application of our proposed method to existing systems, particularly Bitcoin, and the associated effects. The method that we discuss here is the n-variable method.

\begin{figure}[tb]
\begin{center}
\includegraphics[width=0.5\textwidth]{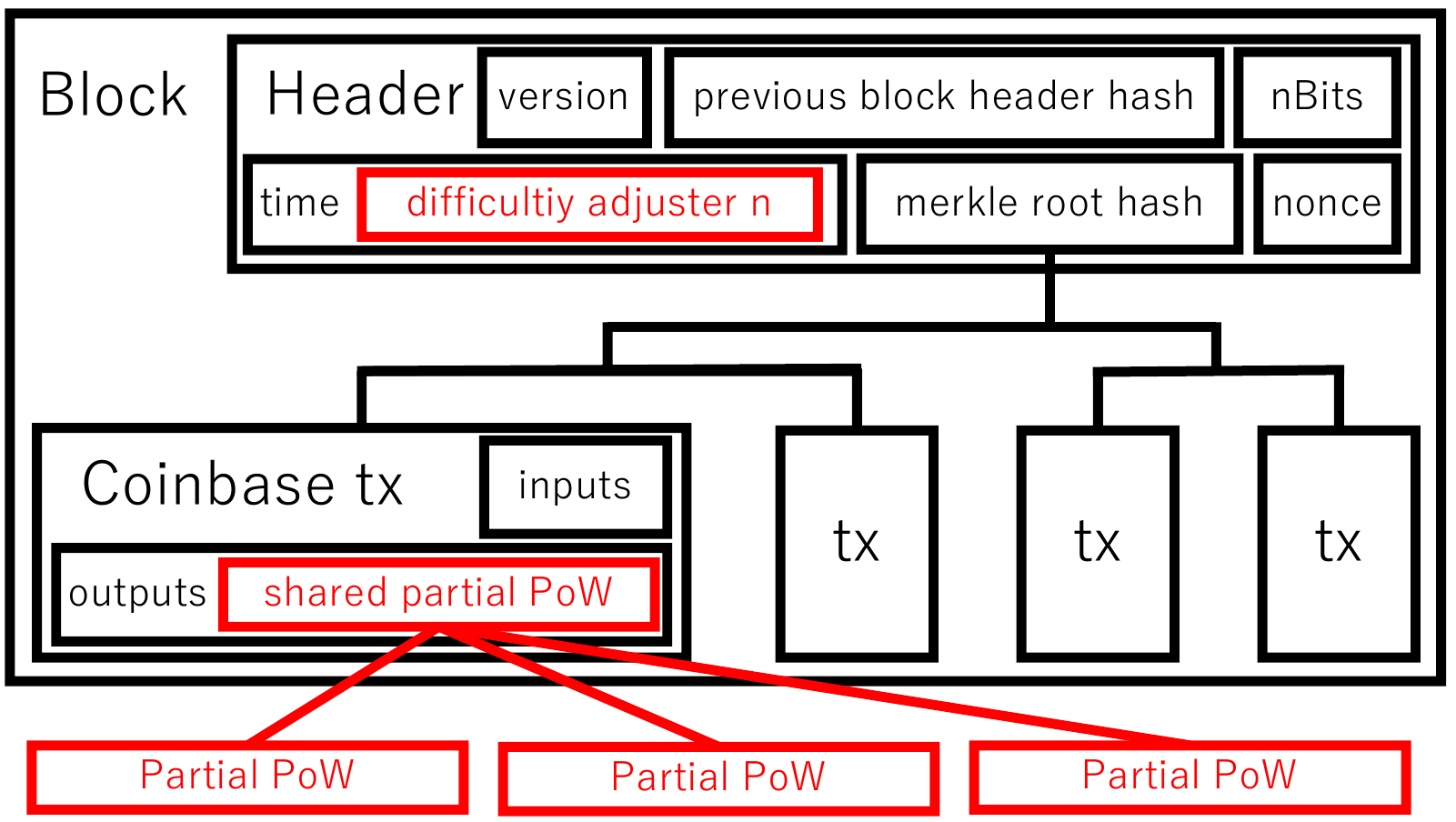}
\end{center}
\caption{Data structure of a block with the proposed method. The data enclosed in a red frame represent new data required for the specifications of the proposed method. Each partial PoW is the block header whose hash value meets the difficulty determined by the difficulty adjuster.}
\label{structure}
\end{figure}

\begin{figure}[tb]
\begin{center}
\includegraphics[width=0.5\textwidth]{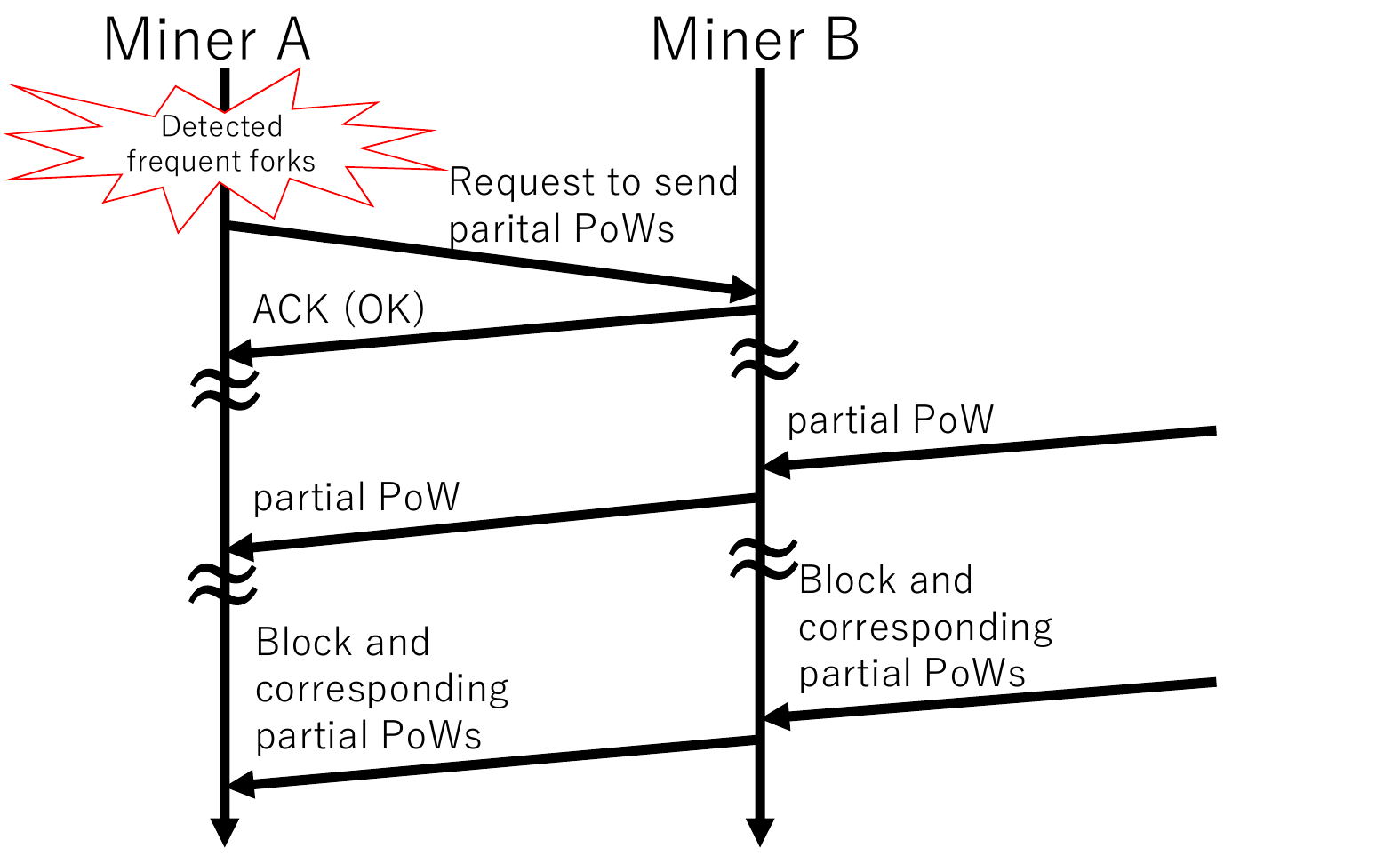}
\end{center}
\caption{Flow of sharing partial PoWs. When a miner has detected frequent forks, it sends the request message to send partial PoWs to its neighbor nodes. Miners do not have to send the block corresponding to a partial PoW. This is because miners can verify that the partial PoW has valid proof of work only with itself.}
\label{flow}
\end{figure}

\textbf{Embedding shared partial PoWs and mining.} The proposed method requires shared partial PoW to be embedded into blocks (Figure \ref{structure}). Using the OP\_RETURN\cite{OPRETURN} opcode is a known method for embedding shared partial PoW into a transaction. For example, an output can be reserved in the Coinbase transaction specifically for embedding the shared partial PoW.

We discuss the impact of our proposed method on the transaction-processing capacity. The size of the shared partial PoW depends on the cryptographic hash function that is used. For instance, 32 bytes are required when using SHA-256\cite{nistfips180-4}. When using the proposed method, no additional information must be inserted into the block. Considering that the current Bitcoin block size is 1 MiB, our proposed method has a minimal impact on the transaction-processing capacity.

\textbf{Committing to the difficulty adjuster.} The n-variable method method allows miners to choose the difficulty adjuster $n$. This stabilizes the supply of partial PoW regardless of the number of miners who use the method.

Each miner sets the difficulty adjuster $n$ in response to the amount of partial PoW currently published during mining. In other words, each miner commits to the difficulty adjuster $n$. The way of commitment is very simple. We use the timestamp of a block header. This timestamp is Unix time and proceeds per second. The determination of $n$ relies on interpreting the initial bits of this timestamp (Figure \ref{structure}). For example, if the first 2 bits are 00, $n = 50$. If the first 2 bits are 01, $n = 100$. If the first 2 bits are 10, $n = 150$. If the first 2 bits are 11, $n = 200$. 

Each miner sets $n = 50$ when the supply of partial PoW is sufficient, and vice versa. We consider the supply of partial PoW to be sufficient when there are more than 25 partial PoWs per block on average, which corresponds to a case where the proportion of hashrate of the attacker is $0.5$ and all honest miners adopt the proposed method and set $n = 50$.
For example, each miner first sets $n = 200$. If there are enough partial PoWs, say more than 100 partial PoWs per block on average, it resets $n = 150$. If there are  20 partial PoWs per block on average, it resets $n = 200$ again.

Miners compare the sum of the reciprocals of difficulty adjusters corresponding to each partial PoW in a chain tie with the n-variable method while they compare the number of shared partial PoW with the n-fixed method. If miners cannot find shared partial PoW in the block, they assign a value of 0 to the block.

The impact of the variable difficulty adjuster is negligible. This is because it only affects the initial bits of the timestamp and a few seconds. Miners cannot distinguish the effect of the method from that of the block propagation time. 

In addition to that, it is obvious that this n-variable method method has both forward compatibility and backward compatibility. Miners who do not adopt the method cannot distinguish blocks that are generated by themselves from blocks that are generated by miners who adopt the method. Similarly, miners who adopt the method cannot distinguish blocks that are generated by themselves from blocks that are generated by miners who do not adopt the method.

\textbf{Sharing partial PoWs.} Our method requires each miner to share a partial PoW. The way of sharing is arbitrary. For example, the same protocol used for propagating transactions can be used. 

We can share partial PoWs as needed (Figure \ref{flow}). The proposed method hardly changes the blockchain protocol, except for the tie-breaking rule. This allows for flexible responses such as using our proposed method only when the network is under attack and forks are frequent. For example, only when frequent forks are detected, each miner can send messages that require their neighbor nodes to send partial PoWs.
This as-needed application improves communication traffic per miner.

Next, let's consider the impact of our proposed method on the network bandwidth. When the proposed method is implemented in Bitcoin, a partial PoW is the block header, which is 80 bytes. That means the expected communication traffic is approximately $80n$ bytes per miner over 10 min. Considering that each miner shares a block's worth of transactions every 10 minutes, the sharing cost of partial PoW is small. This value is considerably smaller than the block size, and the increased bandwidth is acceptable. In addition, when a block is transmitted, it is necessary to send the set of partial PoWs corresponding to the shared partial PoW, which is approximately $80n$ bytes on average. The increased bandwidth can still be considered acceptable with an appropriate setting of $n$. Furthermore, the use of communication protocols that leverage set reconciliation, such as Graphene\cite{Graphene}, can further reduce bandwidth consumption.

\subsection{Partial Application}
We consider the effects in the scenario in which only some miners adopt the proposed method as this is the possibility. First, fewer partial PoWs are likely to be published. Increasing the difficulty adjuster $n$ as a countermeasure is conceivable, which makes it easier to generate partial PoW, thereby increasing the number.
In addition, an increase in the value of $\gamma$ is expected because of miners who do not adopt the proposed method. However, this effect is limited to miners who do not adopt the proposed method and does not negate the effectiveness of its partial application.

The proposed method is a tie-breaking rule. Hence, miners who adopt the proposed method do not generate stale blocks resulting from partial application. This property does not hold in the case of the application of fork choice rules such as GHOST \cite{GHOST} and weighted FRP\cite{PublishorPerish}.
\subsection{Exceeding of Block Propagation Time}
We examine the effects of the block propagation times exceeding $\Delta_B$. Possible effects include honest miners not receiving blocks within the acceptance window or attackers' partially shared PoW being considered sufficiently shared. However, as noted previously, a significant divergence occurs between the 100th percentile block propagation time and the hashrate-weighted average propagation time. Miners who hold the majority of the hashrate of the system tend to receive blocks far sooner than the 100th percentile propagation time. Therefore, even if miners with a smaller share of the hashrate experience delays in receiving blocks, such effects are likely to be minor and do not undermine the effectiveness of the proposed method. The same is true for partial PoW.

\subsection{Clock Drift}\label{drift}
Our proposed method considering clock drift allows us to replace the assumption that the impact of clock drift is negligible with the weaker assumption that there exists a widely known value $D$ such that the absolute value of each miner's clock drift is less than $D$. We examine how clock drift (the variation in how each miner's clock progresses) affects the method.

First, we define the clock drift $d_i$ of miner $i$ as
\begin{align}
 d_i = \frac{x_m - x_r}{x_r}
\end{align}
where $x_m$ represents the measured value of time progression, and $x_r$ is the actual value. We assume that there exists a widely known value $D$ such that $|d_i| < D$ holds for each miner $i$.

First, for the measured value $x_m$, the following holds:
\begin{align}
 x_m = x_r (1 + d_i) \leq x_r (1 + D) \label{r2m}
\end{align}
Similarly, for the actual value $x_r$,
\begin{align}
 x_r = \frac{x_m}{1 + d_i} \leq \frac{x_m}{1 - D} \label{m2r}
\end{align}
is also valid.

Next, we review the properties discussed in Section \ref{parameter}.

First, we revisit the parameter settings for the acceptance window during chain ties. When an attacker executes a Match against post-generated block, any honest miner can receive a block that is generated by another honest miner before the end of the acceptance window if the acceptance window $w$ is greater than $\Delta_B$. Each miner sets their acceptance window $w$ to be $\Delta_B (1 + D)$ to ensure that the acceptance window is in fact greater than $\Delta_B$, as derived from inequality \ref{r2m}.

Next, we reconsider the parameter settings for sufficiently shared partial PoW. If the time required for a valid partial PoW to be sufficiently shared is greater than $w + \Delta_B$, the partial PoW that is published by the attacker after an honest miner generates a block during a Match against post-generated block will not be regarded as sufficiently shared before the end of the acceptance window of any honest miner. Therefore, similar to the parameter settings for the acceptance window, it is appropriate to consider a partial PoW as sufficiently shared after $2\Delta_B (1 + D)$.

We review the parameter settings for the shared partial PoW. Each miner considers a valid partial PoW that has been measured for $2\Delta_B (1 + D)$ as sufficiently shared. The determination of the actual value, as derived from inequality \ref{m2r}, is equivalent to considering a partial PoW as sufficiently shared after $2\Delta_B (1 + D) / (1 - D)$. Therefore, a partial PoW should be included in the shared partial PoW after the actual duration $\Delta_P + 2\Delta_B (1 + D) / (1 - D)$. Similarly, a partial PoW should be included in the shared partial PoW after measuring $(\Delta_P + 2\Delta_B (1 + D) / (1 - D))(1 + D)$.

At this point, we revisit the evaluation of the effectiveness of the method. A partial PoW is included in the shared partial PoW after $(\Delta_P + 2\Delta_B (1 + D)/(1- D))(1 + D)$ as a measured value, which translates into $(\Delta_P + 2\Delta_B (1 + D)/ (1- D))(1 + D)/(1 - D)$ in actual terms. Therefore, similar to the derivation in Lemma \ref{lemma1}, the following holds:
\begin{lem} \label{lemma3}
    $\gamma$ satisfies the following inequality, on average:
    \begin{align*}
    \gamma \leq 1 -  \frac{n - 1}{n + \frac{\alpha}{1 - \alpha}} \exp (-\frac{ \Delta_B \frac{2 (1 + D)^2}{(1 - D)^2} + \Delta_P\frac{2}{1 - D}}{T}).
    \end{align*}
\end{lem}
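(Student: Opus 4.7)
The plan is to replay the proof of Lemma \ref{lemma1} almost verbatim, changing only the time quantities that enter $N_i$ to reflect the drift-adjusted parameter settings derived in the paragraphs immediately preceding the statement. The structural decomposition is the same: bound $\gamma$ by $1 - Pr[\text{Cond}(1)]\,Pr[\text{Cond}(2)]$, where Cond(1) is that some honest miner produces a partial PoW between the attacker's block and the next block, and Cond(2) is that every honest miner folds that partial PoW into its shared partial PoW before the next block is produced.

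For Cond(1), I would reuse the argument of Lemma \ref{lemma1} unchanged. The derivation of $Pr[\text{Cond}(1)] \geq (n-1)/(n+\alpha/(1-\alpha))$ uses only the ratio between block and partial-PoW difficulties and is unaffected by clock drift, so the same geometric sum applies.

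For Cond(2), I would re-derive $N_i$ using the drift-adjusted inclusion time. From the discussion preceding the lemma, a valid partial PoW is included in a miner's shared partial PoW only after an actual duration of at most $(\Delta_P + 2\Delta_B(1+D)/(1-D))(1+D)/(1-D)$, which expands to $\Delta_P(1+D)/(1-D) + 2\Delta_B(1+D)^2/(1-D)^2$. This replaces the $2\Delta_B + \Delta_P$ appearing in Equation \ref{waitime} of Lemma \ref{lemma1}. Then, substituting $p = 1/(M_{all} T)$ as in Equation \ref{pin} and bounding $P_{ij} < \Delta_P$ as before adds one further $\Delta_P/T$ term to the exponent. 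Summing the two $\Delta_P$ contributions gives
\begin{equation*}
\Delta_P + \Delta_P \frac{1+D}{1-D} = \Delta_P \cdot \frac{2}{1-D},
\end{equation*}
which combined with the $\Delta_B$ term produces exactly the exponent in the claim. Multiplying by the Cond(1) bound yields the stated inequality.

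The main obstacle will be bookkeeping the measured-versus-actual distinction rather than any new probabilistic argument. In particular, I must ensure that the single factor of $1/(1-D)$ applied to $\Delta_P$ (and not $(1+D)/(1-D)$) is the right one, since $P_{ij}$ is a physical propagation time independent of any clock, whereas the sufficiently-shared waiting period is measured by a possibly slow clock and must be inflated by $(1+D)$ then re-translated into actual time via $1/(1-D)$. Once this accounting is set out carefully using the inequalities \ref{r2m} and \ref{m2r}, the remaining steps are a transcription of Lemma \ref{lemma1}.
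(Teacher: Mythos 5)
Your proposal is correct and takes essentially the same route as the paper, which states Lemma \ref{lemma3} as following ``similar to the derivation in Lemma \ref{lemma1}'' after computing exactly the drift-adjusted actual inclusion delay $(\Delta_P + 2\Delta_B(1+D)/(1-D))(1+D)/(1-D)$ that you substitute for $2\Delta_B + \Delta_P$, with the Cond(1) bound left untouched. Your measured-versus-actual accounting and the combination $\Delta_P + \Delta_P\frac{1+D}{1-D} = \Delta_P\frac{2}{1-D}$ reproduce the stated exponent exactly.
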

Similarly, the following lemma also holds:
\begin{lem} \label{lemma4}
    $\gamma$ satisfies the following inequality, on average:
    \begin{align*}
    \gamma \leq 1 -  \frac{1}{2} \exp (-\frac{ \Delta_B \frac{2 (1 + D)^2}{(1 - D)^2} + \Delta_P\frac{2}{1 - D}}{T}).
    \end{align*}
\end{lem}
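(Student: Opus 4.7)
The plan is to adapt the proof of Lemma~\ref{lemma2} to the clock-drift setting, reusing the actual-time conversions established earlier in Section~\ref{drift}. The combinatorial content of Lemma~\ref{lemma2} --- that whenever every honest miner holds a shared partial PoW set at least as large as the one embedded in the attacker's block, the chain-weight comparison ends in a tie and the uniform random tiebreak picks the attacker's chain with probability at most $1/2$ --- is purely about how chains are ranked and involves no time argument at all, so it transfers to the drifting-clock setting verbatim.

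What does change is the length of the actual-time interval in which no new block may appear for that ``all honest miners caught up'' event to hold. I would use the conversion work done just above Lemma~\ref{lemma3}: a valid partial PoW published by the attacker at the moment of block generation (the attacker is assumed to deliver it to every honest miner instantly, but it may still take up to $\Delta_P$ of actual time to reach a miner who learns of it indirectly, and then each receiver waits out the sufficiently-shared timer on its own drifting clock) ends up in the shared partial PoW set of every honest miner within actual duration at most $\Delta_B \cdot 2(1+D)^2/(1-D)^2 + \Delta_P \cdot 2/(1-D)$. This is exactly the quantity appearing in the exponent of the claim, and it coincides with the interval used in Lemma~\ref{lemma3}, which is a useful sanity check.

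Finally, since block generation across the whole network is Poisson with mean interval $T$, the probability that no block is mined during this interval is $\exp(-(\Delta_B \cdot 2(1+D)^2/(1-D)^2 + \Delta_P \cdot 2/(1-D))/T)$. Conditional on that event, the random tiebreak selects the attacker's chain with probability at most $1/2$, giving the stated upper bound on $\gamma$.

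The only real obstacle is correctly accounting for both directions of the measured-versus-actual conversion without dropping or doubling a factor of $(1+D)/(1-D)$. The clean route is to treat the already-derived per-partial-PoW actual inclusion duration $(\Delta_P + 2\Delta_B(1+D)/(1-D))(1+D)/(1-D)$ as a black box and substitute it into the Lemma~\ref{lemma2} skeleton, rather than redoing the clock-drift algebra inside the Lemma~\ref{lemma4} proof itself.
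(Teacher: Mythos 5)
Your proposal is correct and follows essentially the same route as the paper, which establishes Lemma~\ref{lemma4} implicitly by transplanting the Lemma~\ref{lemma2} argument (every honest miner catches up to the attacker's shared partial PoW, forcing at worst a tie resolved randomly with probability at most $1/2$) into the clock-drift setting with the actual-time delays derived in Section~\ref{drift}. Your accounting of the exponent --- actual propagation time $\Delta_P$ plus the drift-converted inclusion timer $(\Delta_P + 2\Delta_B(1+D)/(1-D))(1+D)/(1-D)$, which sum to $2\Delta_B(1+D)^2/(1-D)^2 + 2\Delta_P/(1-D)$ since $\Delta_P + \Delta_P(1+D)/(1-D) = 2\Delta_P/(1-D)$ --- is exactly the paper's implicit derivation, correctly avoiding the pitfall of substituting the inclusion timer alone for the whole drift-free interval $2\Delta_B + 2\Delta_P$.
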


According to lemmas \ref{lemma3} and \ref{lemma4}, the following theorem holds in a system that adopts the proposed method considering clock drift:
\begin{thm} \label{third2}
$\gamma$ satisfies the following inequality, on average:
\begin{align*}
    \gamma \leq& \min( 1 - \frac{1}{2} \exp (-\frac{ \Delta_B \frac{2 (1 + D)^2}{(1 - D)^2} + \Delta_P\frac{2}{1 - D}}{T}) , \\&1 - \frac{n - 1}{n + \frac{\alpha}{1 - \alpha}} \exp (-\frac{ \Delta_B \frac{2 (1 + D)^2}{(1 - D)^2} + \Delta_P\frac{2}{1 - D}}{T})).
\end{align*}
\end{thm}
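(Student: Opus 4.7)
The plan is to mirror the derivation used for the analogous no-drift result (Theorem \ref{third1}), which combined Lemmas \ref{lemma1} and \ref{lemma2} by taking their pointwise minimum. The two preceding lemmas (Lemmas \ref{lemma3} and \ref{lemma4}) each establish an independent upper bound on the average $\gamma$ in the presence of clock drift, so requiring both bounds simultaneously yields the stated minimum directly.

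First I would note that for any attacker executing Match against post-generated block under the proposed method with drift parameter $D$, both lemmas apply to the same scenario. Lemma \ref{lemma3} bounds $\gamma$ via the probability of the two drift-adjusted sufficient conditions from the proof of Lemma \ref{lemma1}: namely, (i) an honest miner generates a partial PoW after the attacker's block and before the next block is generated, and (ii) that partial PoW becomes sufficiently shared across honest miners before the next block arrives. Lemma \ref{lemma4} independently bounds $\gamma$ via the probability that no new block is generated during the window required for the attacker's published partial PoWs to be deemed sufficiently shared by all honest miners, in which case the tie is broken uniformly at random. Since each inequality holds unconditionally as an upper bound on $\gamma$, their minimum is also an upper bound, and the claimed expression follows.

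The main obstacle --- really the only subtlety --- is verifying that the two bounds reference identical drift-adjusted parameters so they can be combined without further adjustment. Both bounds share the common exponent $\Delta_B \tfrac{2(1+D)^2}{(1-D)^2} + \Delta_P \tfrac{2}{1-D}$, which arises from the re-derivation in Section \ref{drift}: acceptance window $\Delta_B(1+D)$ via inequality \ref{r2m}, sufficient-sharing threshold $2\Delta_B(1+D)$ (measured), and inclusion threshold $(\Delta_P + 2\Delta_B(1+D)/(1-D))(1+D)$ (measured), which translates through inequality \ref{m2r} into actual elapsed time $(\Delta_P + 2\Delta_B(1+D)/(1-D))(1+D)/(1-D)$. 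Once consistency of these parameters across the two lemmas is confirmed, the theorem reduces to a single line: writing the minimum of the two lemma bounds and invoking the trivial fact that $\gamma \le a$ and $\gamma \le b$ imply $\gamma \le \min(a,b)$. No additional probabilistic argument or new calculation is required.
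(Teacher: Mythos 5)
Your proposal matches the paper's own proof: Theorem \ref{third2} is obtained exactly as you describe, by combining the independent upper bounds of Lemmas \ref{lemma3} and \ref{lemma4} (the drift-adjusted analogues of Lemmas \ref{lemma1} and \ref{lemma2}) via $\gamma \le a$ and $\gamma \le b$ implying $\gamma \le \min(a,b)$, just as Theorem \ref{third1} follows in the no-drift case. Your check that both lemmas share the common drift-adjusted exponent $\Delta_B \frac{2(1+D)^2}{(1-D)^2} + \Delta_P \frac{2}{1-D}$ is consistent with the parameter re-derivations in Section \ref{drift}, so no gap remains.
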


For example, assume $D = 0.1$. This indicates a clock drift of less than 6 min/h for each miner, which is considerably larger than the actual expected values \cite{InternalClockDriftEstimationinComputerClusters}. In this case, $ \Delta_B \frac{2 (1 + D)^2}{(1 - D)^2} + \Delta_P\frac{2}{1 - D}$ in Theorem \ref{third2} is approximately $2.99\Delta_B + 2.23 \Delta_P$. This implies that the effectiveness of the proposed method considering a clock drift of $D = 0.1$ with $\Delta_B = \Delta_P = 10$ is higher than that of the method that does not consider clock drift when $\Delta_B = \Delta_P = 20$ (as illustrated in Figure \ref{gamma2}).

\section{Conclusion} \label{conclusion}
We have proposed a novel last-generated rule based on partial PoW for PoW blockchain systems. Our method is designed to suppress intentional chain ties that are caused by attackers. When applied to Bitcoin, our method increases the threshold of the proportion of the attacker's hashrate that is necessary for selfish mining (SM) to $\alpha = 0.31479$, regardless of the block propagation abilities of the attacker. Whereas existing methods require extensive updates or challenging assumptions for PoW blockchain systems, our method can be easily applied to existing systems including Bitcoin. In addition, we have proposed extended selfish mining (ESM) to exploit last-generated rules and demonstrated the importance of parameter settings in last-generated rules.



\bibliography{hoge} 
\bibliographystyle{IEEEtran.bst}

\end{document}